\documentclass[10pt,showpacs,letterpaper,aps,nofootinbib,prx,twocolumn,superscriptaddress]{revtex4-2} 
\usepackage{amsmath,amsfonts,amssymb,amsthm,bbm}
\usepackage[dvipsnames]{xcolor}
\usepackage{relsize}
\usepackage{comment}
\usepackage{stmaryrd}
\usepackage{changepage} 
\usepackage{graphicx}
\usepackage{enumitem}
\usepackage[colorlinks=true,linkcolor=blue,citecolor=lighterblue,urlcolor=lighterblue]{hyperref}
\usepackage{xcolor}
\definecolor{lighterblue}{rgb}{0.1,0.1,0.8}
\usepackage{enumitem}

\usepackage{mathtools}
%\usepackage{orcidlink}
%--------------------------------MACROS-------------------------------------------------------------------------------

%\newcommand{\beq}{\begin{equation}}
%\newcommand{\eeq}{\end{equation}}

%------------------------------------------------------------------------

\usepackage[colorlinks=true]{hyperref} % must be loaded last

%********JOHN************
\usepackage{dsfont}
\usepackage{yfonts}
\usepackage{rotating}
\usepackage{floatpag}
\rotfloatpagestyle{empty}

\setcounter{tocdepth}{4}

\usepackage{hyperref}
%\font\omding=omding

\usepackage{amsmath,amsthm,amssymb}
\usepackage{xspace,enumerate,color,epsfig}
\usepackage{graphicx}
%\graphicspath{{.}{./figures/}}
\usepackage{marvosym}

\usepackage{tikzfig}
\usepackage{docmute}
\usepackage{keycommand}

\usepackage{pifont}% http://ctan.org/pkg/pifont

\usepackage{enumitem}

\input{defs.tex}
\usetikzlibrary{shapes.multipart}

\tikzstyle{xor}=[circuit ee IEC, bulb,fill=white,rotate=45]
\tikzstyle{merge}=[circuit ee IEC, bulb,fill=white]
\tikzstyle{dmerge}=[circuit ee IEC, bulb,fill=black!20]
\tikzset{->-/.style={decoration={
  markings,
  mark=at position .5 with {\arrow{>}}},postaction={decorate}}}
\tikzset{-<-/.style={decoration={
  markings,
  mark=at position .5 with {\arrow{<}}},postaction={decorate}}}

\tikzstyle{bwSpider}=[
       rectangle split,
       rectangle split parts=2,
       rectangle split part fill={black,white},
 minimum size=3.6 mm, inner sep=-2mm, draw=black,scale=0.5,rounded corners=0.8 mm
       ]
 \tikzstyle{wbSpider}=[
       rectangle split,
       rectangle split parts=2,
       rectangle split part fill={white,black},
 minimum size=3.6 mm, inner sep=-2mm, draw=black,scale=0.5,rounded corners=0.8 mm
       ]

\tikzstyle{epiCopoint}=[regular polygon,regular polygon sides=3,draw,scale=0.75,inner sep=-0.5pt,minimum width=5mm,fill=white,regular polygon rotate=0,line width=1pt]
\tikzstyle{epiPoint}=[regular polygon,regular polygon sides=3,draw,scale=0.75,inner sep=-0.5pt,minimum width=5mm,fill=white,regular polygon rotate=180,line width=1pt]
\tikzstyle{epiPointWide}=[regular polygon,regular polygon sides=3,draw,scale=0.75,inner sep=-0.5pt,minimum width=8mm,fill=white,regular polygon rotate=180,line width=1pt]
\tikzstyle{epiBox}=[fill=white,draw, line width = 1pt,inner sep=0.6mm,font=\footnotesize,minimum height=3mm,minimum width=3mm]
\tikzstyle{epiBoxWide}=[fill=white,draw, line width = 1pt,inner sep=0.6mm,font=\footnotesize,minimum height=3mm,minimum width=5mm]
\tikzstyle{epiBoxVeryWide}=[fill=white,draw, line width = 1pt,inner sep=0.6mm,font=\footnotesize,minimum height=3mm,minimum width=7mm]
\tikzstyle{bWire}=[line width = .5pt, color=black]
\tikzstyle{oWire}=[line width = .5pt, color=black!50, double, double distance = 1pt]
\tikzstyle{osWire}=[line width = .5pt, color=black!50]
\tikzstyle{qWire}=[line width = 1pt, color=black]
\tikzstyle{cWire}=[color=gray,line width = .75pt]%[densely dotted, thick]
\tikzstyle{CqWire}=[color=gray,line width = .75pt,->-]
\tikzstyle{CcWire}=[color=gray,line width = .75pt,->-]%[densely dotted, thick]
\tikzstyle{RqWire}=[line width = 1pt, color=black,-<-]
\tikzstyle{RcWire}=[color=gray,line width = .75pt,-<-]%[densely dotted, thick]
%\tikzstyle{qWire}=[line width = 1pt, color=black]
%\tikzstyle{cWire}=[color=gray,line width = .75pt]%[densely dotted, thick]
%MY GROUND:
\tikzstyle{env}=[copoint,regular polygon rotate=0,minimum width=0.2cm, fill=black]

\tikzstyle{probs}=[shape=semicircle,fill=white,draw=black,shape border rotate=180,minimum width=1.2cm]

%SIMON'S GROUND:
%
%\newcommand{\ground}[2]{
%\node[inner sep=0mm] (#1) at (#2) {};
%\draw[thick]  ($(#2)+(0.3,-0.01)$) -- ($(#2)+(-0.3,-0.01)$);
%\draw[thick]  ($(#2)+(0.23,0.069)$) -- ($(#2)+(-0.22,0.069)$);se
%\draw[thick]  ($(#2)+(0.16,0.139)$) -- ($(#2)+(-0.16,0.139)$);
%\draw[thick]  ($(#2)+(0.09,0.209)$) -- ($(#2)+(-0.09,0.209)$);
%\draw[thick]  ($(#2)+(0.02,0.279)$) -- ($(#2)+(-0.02,0.279)$);
%}
%
%\newcommand{\sground}[2]{
%\node[inner sep=0mm] (#1) at (#2) {};
%\draw[thick]  ($(#2)+(0.2,-0.01)$) -- ($(#2)+(-0.2,-0.01)$);
%\draw[thick]  ($(#2)+(0.12,0.069)$) -- ($(#2)+(-0.12,0.069)$);
%\draw[thick]  ($(#2)+(0.04,0.139)$) -- ($(#2)+(-0.04,0.139)$);
%}

%%%%%%%%%%%%%%%%%%%%%%%%%%%%%%%%%

\tikzstyle{every picture}=[baseline=-0.25em,scale=0.5]
\tikzstyle{dotpic}=[] % for backwards-compatibility
\tikzstyle{diredges}=[every to/.style={diredge}]
\tikzstyle{math matrix}=[matrix of math nodes,left delimiter=(,right delimiter=),inner sep=2pt,column sep=1em,row sep=0.5em,nodes={inner sep=0pt},text height=1.5ex, text depth=0.25ex]

% ==========
% = LABELS =
% ==========

\tikzstyle{inline text}=[text height=1.5ex, text depth=0.25ex,yshift=0.5mm]
\tikzstyle{label}=[font=\footnotesize,text height=1.5ex, text depth=0.25ex,yshift=0.5mm]
\tikzstyle{left label}=[label,anchor=east,xshift=1.5mm]
\tikzstyle{right label}=[label,anchor=west,xshift=-1mm]
\tikzstyle{up label}=[label,anchor=south,yshift=-1mm]

% create a white box of the given tikz size

\tikzstyle{braceedge}=[decorate,decoration={brace,amplitude=2mm,raise=-1mm}]
\tikzstyle{small braceedge}=[decorate,decoration={brace,amplitude=1mm,raise=-1mm}]

\tikzstyle{doubled}=[line width=1.6pt] % set the line width for all doubled (quantum) maps/wires
\tikzstyle{boldedge}=[doubled,shorten <=-0.17mm,shorten >=-0.17mm]
\tikzstyle{boldedgegray}=[doubled,gray,shorten <=-0.17mm,shorten >=-0.17mm]
\tikzstyle{singleedgegray}=[gray]%,shorten <=-0.1mm,shorten >=-0.1mm]

\tikzstyle{semidoubled}=[line width=1.4pt] % set the line width for all doubled (quantum) maps/wires
\tikzstyle{semiboldedgegray}=[semidoubled,gray,shorten <=-0.17mm,shorten >=-0.17mm]

\tikzstyle{boxedge}=[semiboldedgegray]

\tikzstyle{boldedgedashed}=[very thick,dashed,shorten <=-0.17mm,shorten >=-0.17mm]
\tikzstyle{vboldedgedashed}=[doubled,dashed,shorten <=-0.17mm,shorten >=-0.17mm]
\tikzstyle{left hook arrow}=[left hook-latex]
\tikzstyle{right hook arrow}=[right hook-latex]
\tikzstyle{sembracket}=[line width=0.5pt,shorten <=-0.07mm,shorten >=-0.07mm]

\tikzstyle{causal edge}=[->,thick,gray]
\tikzstyle{causal nondir}=[thick,gray]
\tikzstyle{timeline}=[thick,gray, dashed]

% edges for (symmetric) correspondences/correlations
\tikzstyle{cedge}=[<->,thick,gray!70!white]

\tikzstyle{empty diagram}=[draw=gray!40!white,dashed,shape=rectangle,minimum width=1cm,minimum height=1cm]
\tikzstyle{empty diagram small}=[draw=gray!50!white,dashed,shape=rectangle,minimum width=0.6cm,minimum height=0.5cm]

% ================
% = VARIOUS DOTS =
% ================

\tikzstyle{dot}=[inner sep=0mm,minimum width=2mm,minimum height=2mm,draw,shape=circle]
\tikzstyle{bigdot}=[inner sep=0mm,minimum width=5mm,minimum height=5mm,draw,shape=circle]
\tikzstyle{leak}=[white dot, shape=regular polygon, minimum size=3.3 mm, regular polygon sides=3, outer sep=-0.2mm, regular polygon rotate=270]
%\tikzstyle{proj}=[white dot, shape=regular polygon, minimum size=3.3 mm, regular polygon sides=4, outer sep=-0.2mm]
\tikzstyle{proj}=[regular polygon,regular polygon sides=4,draw,scale=0.75,inner sep=-0.5pt,minimum width=6mm,fill=white]
\tikzstyle{projOut}=[regular polygon,regular polygon sides=3,draw,scale=0.75,inner sep=-0.5pt,minimum width=7.5mm,fill=white,regular polygon rotate=180]
\tikzstyle{projIn}=[regular polygon,regular polygon sides=3,draw,scale=0.75,inner sep=-0.5pt,minimum width=7.5mm,fill=white]
\tikzstyle{Vleak}=[white dot, shape=regular polygon, minimum size=3.3 mm, regular polygon sides=3, outer sep=-0.2mm, regular polygon rotate=90]
\tikzstyle{dleak}=[white dot, line width=1.6pt, shape=regular polygon, minimum size=3.3 mm, regular polygon sides=3, outer sep=-0.2mm, regular polygon rotate=270]

\tikzstyle{Wsquare}=[white dot, shape=regular polygon, rounded corners=0.8 mm, minimum size=3.3 mm, regular polygon sides=3, outer sep=-0.2mm]
\tikzstyle{Wsquareadj}=[white dot, shape=regular polygon, rounded corners=0.8 mm, minimum size=3.3 mm, regular polygon sides=3, outer sep=-0.2mm, regular polygon rotate=180]
% \tikzstyle{ddot}=[inner sep=0.6mm, double=white, very thick, double distance=1pt, minimum width=2.5mm,minimum height=2.5mm,draw,shape=circle]
\tikzstyle{ddot}=[inner sep=0mm, doubled, minimum width=2.5mm,minimum height=2.5mm,draw,shape=circle]

\tikzstyle{clear dot}=[dot,fill=none,text depth=-0.2mm,draw=gray, line width = .75pt]
\tikzstyle{tall clear dot}=[dot,fill=none,text depth=-0.2mm,draw=gray, line width = .75pt,shape=ellipse, minimum height=5mm]
\tikzstyle{wide clear dot}=[dot,fill=none,text depth=-0.2mm,draw=gray, line width = .75pt, shape=ellipse, minimum width = 5mm]
\tikzstyle{very wide clear dot}=[dot,fill=none,text depth=-0.2mm,draw=gray, line width = .75pt, shape=ellipse, minimum width = 7mm ]

\tikzstyle{black dot}=[dot,fill=black]
\tikzstyle{white dot}=[dot,fill=white,,text depth=-0.2mm]
\tikzstyle{white Wsquare}=[Wsquare,fill=gray,,text depth=-0.2mm]
\tikzstyle{white Wsquareadj}=[Wsquareadj,fill=white,,text depth=-0.2mm]
\tikzstyle{green dot}=[white dot] % for backwards-compatibility
\tikzstyle{gray dot}=[dot,fill=gray!40!white,,text depth=-0.2mm]
\tikzstyle{red dot}=[gray dot] % for backwards-compatibility

% \tikzstyle{red point}=[point,fill=red,font=\color{white}]
% \tikzstyle{red dpoint}=[dpoint,fill=red,font=\color{white}]
% \tikzstyle{red dot}=[dot,fill=red,font=\color{white}]
% \tikzstyle{red ddot}=[ddot,fill=red,font=\color{white}]

\tikzstyle{black ddot}=[ddot,fill=black]
\tikzstyle{white ddot}=[ddot,fill=white]
\tikzstyle{gray ddot}=[ddot,fill=gray!40!white]

\tikzstyle{gray edge}=[gray!60!white]

\tikzstyle{small dot}=[inner sep=0.2mm,minimum width=0pt,minimum height=0pt,draw,shape=circle]

\tikzstyle{small black dot}=[small dot,fill=black]
\tikzstyle{small white dot}=[small dot,fill=white]
\tikzstyle{small gray dot}=[small dot,fill=gray,draw=gray]

\tikzstyle{causal dot}=[inner sep=0.4mm,minimum width=0pt,minimum height=0pt,draw=white,shape=circle,fill=gray!40!white]

%\tikzstyle{phase dimensions}=[font=\footnotesize,inner sep=0.5pt,minimum width=5mm,minimum height=5mm]

\tikzstyle{phase dimensions}=[minimum size=5mm,font=\footnotesize,rectangle,rounded corners=2.5mm,inner sep=0.2mm,outer sep=-2mm]
%,outer sep=-2mm,text height=1ex, text depth=0.25ex,
\tikzstyle{dphase dimensions}=[minimum size=5mm,font=\footnotesize,rectangle,rounded corners=2.5mm,inner sep=0.2mm,outer sep=-2mm]
%\tikzstyle{dphase dimensions}=[minimum size=5mm,font=\footnotesize,rectangle,rounded corners=2.5mm,inner sep=0.2mm,outer sep=-2mm]

\tikzstyle{white phase dot}=[dot,fill=white,phase dimensions]
\tikzstyle{white phase ddot}=[ddot,fill=white,dphase dimensions]

\tikzstyle{white rect ddot}=[draw=black,fill=white,doubled,minimum size=5mm,font=\footnotesize,rectangle,rounded corners=2.5mm,inner sep=0.2mm]
\tikzstyle{gray rect ddot}=[draw=black,fill=gray!40!white,doubled,minimum size=6mm,font=\footnotesize,rectangle,rounded corners=3mm]

\tikzstyle{gray phase dot}=[dot,fill=gray!40!white,phase dimensions]
\tikzstyle{gray phase ddot}=[ddot,fill=gray!40!white,dphase dimensions]
\tikzstyle{grey phase dot}=[gray phase dot]
\tikzstyle{grey phase ddot}=[gray phase ddot]

\tikzstyle{small phase dimensions}=[minimum size=4mm,font=\tiny,rectangle,rounded corners=2mm,inner sep=0.2mm,outer sep=-2mm]
\tikzstyle{small dphase dimensions}=[minimum size=4mm,font=\tiny,rectangle,rounded corners=2mm,inner sep=0.2mm,outer sep=-2mm]

\tikzstyle{small gray phase dot}=[dot,fill=gray!40!white,small phase dimensions]
\tikzstyle{small gray phase ddot}=[ddot,fill=gray!40!white,small dphase dimensions]

% =======================
% = OTHER KINDS OF MAPS =
% =======================

\tikzstyle{small map}=[draw,shape=rectangle,minimum height=4mm,minimum width=4mm,fill=white]

\tikzstyle{cnot}=[fill=white,shape=circle,inner sep=-1.4pt]

\tikzstyle{asym hadamard}=[fill=white,draw,shape=NEbox,inner sep=0.6mm,font=\footnotesize,minimum height=4mm]
\tikzstyle{asym hadamard conj}=[fill=white,draw,shape=NWbox,inner sep=0.6mm,font=\footnotesize,minimum height=4mm]
\tikzstyle{asym hadamard dag}=[fill=white,draw,shape=SEbox,inner sep=0.6mm,font=\footnotesize,minimum height=4mm]

\tikzstyle{hadamard}=[fill=white,draw,inner sep=0.6mm,font=\footnotesize,minimum height=4mm,minimum width=4mm]
\tikzstyle{small hadamard}=[fill=white,draw,inner sep=0.6mm,minimum height=1.5mm,minimum width=1.5mm]
\tikzstyle{small hadamard rotate}=[small hadamard,rotate=45]
\tikzstyle{dhadamard}=[hadamard,doubled]
\tikzstyle{small dhadamard}=[small hadamard,doubled]
\tikzstyle{small dhadamard rotate}=[small hadamard rotate,doubled]
\tikzstyle{antipode}=[white dot,inner sep=0.3mm,font=\footnotesize]

\tikzstyle{scalar}=[diamond,draw,inner sep=0.5pt,font=\small]
\tikzstyle{dscalar}=[diamond,doubled, draw,inner sep=0.5pt,font=\small]

\tikzstyle{small box}=[rectangle,inline text,fill=white,draw,minimum height=5mm,yshift=-0.5mm,minimum width=5mm,font=\small]
\tikzstyle{small gray box}=[small box,fill=gray!30]
\tikzstyle{medium box}=[rectangle,inline text,fill=white,draw,minimum height=5mm,yshift=-0.5mm,minimum width=10mm,font=\small]
\tikzstyle{square box}=[small box] % for backwards-compatibility
\tikzstyle{medium gray box}=[small box,fill=gray!30]
\tikzstyle{semilarge box}=[rectangle,inline text,fill=white,draw,minimum height=5mm,yshift=-0.5mm,minimum width=12.5mm,font=\small]
\tikzstyle{large box}=[rectangle,inline text,fill=white,draw,minimum height=5mm,yshift=-0.5mm,minimum width=15mm,font=\small]
\tikzstyle{large gray box}=[small box,fill=gray!30]

\tikzstyle{Bayes box}=[rectangle,fill=black,draw, minimum height=3mm, minimum width=3mm]

\tikzstyle{gray square point}=[small box,fill=gray!50]

\tikzstyle{dphase box white}=[dhadamard]
\tikzstyle{dphase box gray}=[dhadamard,fill=gray!50!white]
\tikzstyle{phase box white}=[hadamard]
\tikzstyle{phase box gray}=[hadamard,fill=gray!50!white]

% \tikzstyle{point}=[regular polygon,regular polygon sides=3,draw,inner sep=-0.65pt,minimum width=8mm,fill=white,regular polygon rotate=180]
% \tikzstyle{copoint}=[regular polygon,regular polygon sides=3,draw,inner sep=-0.65pt,minimum width=8mm,fill=white]
\tikzstyle{point}=[regular polygon,regular polygon sides=3,draw,scale=0.75,inner sep=-0.5pt,minimum width=9mm,fill=white,regular polygon rotate=180]
\tikzstyle{infpoint}=[regular polygon,regular polygon sides=3,draw,scale=0.75,inner sep=-0.5pt,minimum width=9mm,fill=white,regular polygon rotate=90]
\tikzstyle{point nosep}=[regular polygon,regular polygon sides=3,draw,scale=0.75,inner sep=-2pt,minimum width=9mm,fill=white,regular polygon rotate=180]
\tikzstyle{infcopoint}=[regular polygon,regular polygon sides=3,draw,scale=0.75,inner sep=-0.5pt,minimum width=9mm,fill=white,regular polygon rotate=270]
\tikzstyle{copoint}=[regular polygon,regular polygon sides=3,draw,scale=0.75,inner sep=-0.5pt,minimum width=9mm,fill=white]
\tikzstyle{dpoint}=[point,doubled]
\tikzstyle{dcopoint}=[copoint,doubled]

\tikzstyle{pointgrow}=[shape=cornerpoint,kpoint common,scale=0.75,inner sep=3pt]
\tikzstyle{pointgrow dag}=[shape=cornercopoint,kpoint common,scale=0.75,inner sep=3pt]

\tikzstyle{wide copoint}=[fill=white,draw,shape=isosceles triangle,shape border rotate=90,isosceles triangle stretches=true,inner sep=0pt,minimum width=1.5cm,minimum height=6.12mm]
\tikzstyle{wide point}=[fill=white,draw,shape=isosceles triangle,shape border rotate=-90,isosceles triangle stretches=true,inner sep=0pt,minimum width=1.5cm,minimum height=6.12mm,yshift=-0.0mm]
\tikzstyle{wide point plus}=[fill=white,draw,shape=isosceles triangle,shape border rotate=-90,isosceles triangle stretches=true,inner sep=0pt,minimum width=1.74cm,minimum height=7mm,yshift=-0.0mm]

\tikzstyle{wide dpoint}=[fill=white,doubled,draw,shape=isosceles triangle,shape border rotate=-90,isosceles triangle stretches=true,inner sep=0pt,minimum width=1.5cm,minimum height=6.12mm,yshift=-0.0mm]

\tikzstyle{tinypoint}=[regular polygon,regular polygon sides=3,draw,scale=0.55,inner sep=-0.15pt,minimum width=6mm,fill=white,regular polygon rotate=180]

\tikzstyle{white point}=[point]
\tikzstyle{white dpoint}=[dpoint]
\tikzstyle{green point}=[white point] % for backwards-compatibility
\tikzstyle{white copoint}=[copoint]
\tikzstyle{gray point}=[point,fill=gray!40!white]
\tikzstyle{gray dpoint}=[gray point,doubled]
\tikzstyle{red point}=[gray point] % for backwards-compatibility
\tikzstyle{gray copoint}=[copoint,fill=gray!40!white]
\tikzstyle{gray dcopoint}=[gray copoint,doubled]

\tikzstyle{white point guide}=[regular polygon,regular polygon sides=3,font=\scriptsize,draw,scale=0.65,inner sep=-0.5pt,minimum width=9mm,fill=white,regular polygon rotate=180]

\tikzstyle{black point}=[point,fill=black,font=\color{white}]
\tikzstyle{black copoint}=[copoint,fill=black,font=\color{white}]

\tikzstyle{tiny gray point}=[tinypoint,fill=gray!40!white]

\tikzstyle{diredge}=[->]
\tikzstyle{ddiredge}=[<->]
\tikzstyle{rdiredge}=[<-]
\tikzstyle{thickdiredge}=[->, very thick]
\tikzstyle{pointer edge}=[->,very thick,gray]
\tikzstyle{pointer edge part}=[very thick,gray]
\tikzstyle{dashed edge}=[dashed]
\tikzstyle{thick dashed edge}=[very thick,dashed]
\tikzstyle{thick gray dashed edge}=[thick dashed edge,gray!40]
\tikzstyle{thick map edge}=[very thick,|->]

% =======================
% = PARALLELAGRAM BOXES =
% =======================

\makeatletter
\newcommand{\boxshape}[3]{%
\pgfdeclareshape{#1}{
\inheritsavedanchors[from=rectangle] % this is nearly a rectangle
\inheritanchorborder[from=rectangle]
\inheritanchor[from=rectangle]{center}
\inheritanchor[from=rectangle]{north}
\inheritanchor[from=rectangle]{south}
\inheritanchor[from=rectangle]{west}
\inheritanchor[from=rectangle]{east}
% ... and possibly more
\backgroundpath{% this is new
% store lower right in xa/ya and upper right in xb/yb
\southwest \pgf@xa=\pgf@x \pgf@ya=\pgf@y
\northeast \pgf@xb=\pgf@x \pgf@yb=\pgf@y

\@tempdima=#2
\@tempdimb=#3

\pgfpathmoveto{\pgfpoint{\pgf@xa - 5pt + \@tempdima}{\pgf@ya}}
\pgfpathlineto{\pgfpoint{\pgf@xa - 5pt - \@tempdima}{\pgf@yb}}
\pgfpathlineto{\pgfpoint{\pgf@xb + 5pt + \@tempdimb}{\pgf@yb}}
\pgfpathlineto{\pgfpoint{\pgf@xb + 5pt - \@tempdimb}{\pgf@ya}}
\pgfpathlineto{\pgfpoint{\pgf@xa - 5pt + \@tempdima}{\pgf@ya}}
\pgfpathclose
}
}}

%\boxshape{NEbox}{0pt}{5pt}
\boxshape{NEbox}{0pt}{3pt}
%\boxshape{SEbox}{0pt}{-5pt}
\boxshape{SEbox}{0pt}{-3pt}
\boxshape{NWbox}{5pt}{0pt}
\boxshape{SWbox}{-5pt}{0pt}
\boxshape{EBox}{-3pt}{3pt}
\boxshape{WBox}{3pt}{-3pt}
\makeatother

\tikzstyle{cloud}=[shape=cloud,draw,minimum width=1.5cm,minimum height=1.5cm]

%\tikzstyle{map}=[draw,shape=NEbox,inner sep=2pt,minimum height=6mm,fill=white]
\tikzstyle{map}=[draw,shape=NEbox,inner sep=1pt,minimum height=4mm,fill=white]
\tikzstyle{dashedmap}=[draw,dashed,shape=NEbox,inner sep=2pt,minimum height=6mm,fill=white]
%\tikzstyle{mapdag}=[draw,shape=SEbox,inner sep=2pt,minimum height=6mm,fill=white]
\tikzstyle{mapdag}=[draw,shape=SEbox,inner sep=1pt,minimum height=4mm,fill=white]
\tikzstyle{mapadj}=[draw,shape=SEbox,inner sep=2pt,minimum height=6mm,fill=white]
\tikzstyle{maptrans}=[draw,shape=SWbox,inner sep=2pt,minimum height=6mm,fill=white]
\tikzstyle{mapconj}=[draw,shape=NWbox,inner sep=2pt,minimum height=6mm,fill=white]

\tikzstyle{medium map}=[draw,shape=NEbox,inner sep=2pt,minimum height=6mm,fill=white,minimum width=7mm]
\tikzstyle{medium map dag}=[draw,shape=SEbox,inner sep=2pt,minimum height=6mm,fill=white,minimum width=7mm]
\tikzstyle{medium map adj}=[draw,shape=SEbox,inner sep=2pt,minimum height=6mm,fill=white,minimum width=7mm]
\tikzstyle{medium map trans}=[draw,shape=SWbox,inner sep=2pt,minimum height=6mm,fill=white,minimum width=7mm]
\tikzstyle{medium map conj}=[draw,shape=NWbox,inner sep=2pt,minimum height=6mm,fill=white,minimum width=7mm]
\tikzstyle{semilarge map}=[draw,shape=NEbox,inner sep=2pt,minimum height=6mm,fill=white,minimum width=9.5mm]
\tikzstyle{semilarge map trans}=[draw,shape=SWbox,inner sep=2pt,minimum height=6mm,fill=white,minimum width=9.5mm]
\tikzstyle{semilarge map adj}=[draw,shape=SEbox,inner sep=2pt,minimum height=6mm,fill=white,minimum width=9.5mm]
\tikzstyle{semilarge map dag}=[draw,shape=SEbox,inner sep=2pt,minimum height=6mm,fill=white,minimum width=9.5mm]
\tikzstyle{semilarge map conj}=[draw,shape=NWbox,inner sep=2pt,minimum height=6mm,fill=white,minimum width=9.5mm]
\tikzstyle{large map}=[draw,shape=NEbox,inner sep=2pt,minimum height=6mm,fill=white,minimum width=12mm]
\tikzstyle{large map conj}=[draw,shape=NWbox,inner sep=2pt,minimum height=6mm,fill=white,minimum width=12mm]
\tikzstyle{very large map}=[draw,shape=NEbox,inner sep=2pt,minimum height=6mm,fill=white,minimum width=17mm]

\tikzstyle{medium dmap}=[draw,doubled,shape=NEbox,inner sep=2pt,minimum height=6mm,fill=white,minimum width=7mm]
\tikzstyle{medium dmap dag}=[draw,doubled,shape=SEbox,inner sep=2pt,minimum height=6mm,fill=white,minimum width=7mm]
\tikzstyle{medium dmap adj}=[draw,doubled,shape=SEbox,inner sep=2pt,minimum height=6mm,fill=white,minimum width=7mm]
\tikzstyle{medium dmap trans}=[draw,doubled,shape=SWbox,inner sep=2pt,minimum height=6mm,fill=white,minimum width=7mm]
\tikzstyle{medium dmap conj}=[draw,doubled,shape=NWbox,inner sep=2pt,minimum height=6mm,fill=white,minimum width=7mm]
\tikzstyle{semilarge dmap}=[draw,doubled,shape=NEbox,inner sep=2pt,minimum height=6mm,fill=white,minimum width=9.5mm]
\tikzstyle{semilarge dmap trans}=[draw,doubled,shape=SWbox,inner sep=2pt,minimum height=6mm,fill=white,minimum width=9.5mm]
\tikzstyle{semilarge dmap adj}=[draw,doubled,shape=SEbox,inner sep=2pt,minimum height=6mm,fill=white,minimum width=9.5mm]
\tikzstyle{semilarge dmap dag}=[draw,doubled,shape=SEbox,inner sep=2pt,minimum height=6mm,fill=white,minimum width=9.5mm]
\tikzstyle{semilarge dmap conj}=[draw,doubled,shape=NWbox,inner sep=2pt,minimum height=6mm,fill=white,minimum width=9.5mm]
\tikzstyle{large dmap}=[draw,doubled,shape=NEbox,inner sep=2pt,minimum height=6mm,fill=white,minimum width=12mm]
\tikzstyle{large dmap conj}=[draw,doubled,shape=NWbox,inner sep=2pt,minimum height=6mm,fill=white,minimum width=12mm]
\tikzstyle{large dmap trans}=[draw,doubled,shape=SWbox,inner sep=2pt,minimum height=6mm,fill=white,minimum width=12mm]
\tikzstyle{large dmap adj}=[draw,doubled,shape=SEbox,inner sep=2pt,minimum height=6mm,fill=white,minimum width=12mm]
\tikzstyle{large dmap dag}=[draw,doubled,shape=SEbox,inner sep=2pt,minimum height=6mm,fill=white,minimum width=12mm]
\tikzstyle{very large dmap}=[draw,doubled,shape=NEbox,inner sep=2pt,minimum height=6mm,fill=white,minimum width=19.5mm]

\tikzstyle{muxbox}=[draw,shape=rectangle,minimum height=3mm,minimum width=3mm,fill=white]
\tikzstyle{dmuxbox}=[muxbox,doubled]

\tikzstyle{box}=[draw,shape=rectangle,inner sep=2pt,minimum height=6mm,minimum width=6mm,fill=white]
\tikzstyle{dbox}=[draw,doubled,shape=rectangle,inner sep=2pt,minimum height=6mm,minimum width=6mm,fill=white]
\tikzstyle{dmap}=[draw,doubled,shape=NEbox,inner sep=2pt,minimum height=6mm,fill=white]
\tikzstyle{dmapdag}=[draw,doubled,shape=SEbox,inner sep=2pt,minimum height=6mm,fill=white]
\tikzstyle{dmapadj}=[draw,doubled,shape=SEbox,inner sep=2pt,minimum height=6mm,fill=white]
\tikzstyle{dmaptrans}=[draw,doubled,shape=SWbox,inner sep=2pt,minimum height=6mm,fill=white]
\tikzstyle{dmapconj}=[draw,doubled,shape=NWbox,inner sep=2pt,minimum height=6mm,fill=white]

\tikzstyle{ddmap}=[draw,doubled,dashed,shape=NEbox,inner sep=2pt,minimum height=6mm,fill=white]
\tikzstyle{ddmapdag}=[draw,doubled,dashed,shape=SEbox,inner sep=2pt,minimum height=6mm,fill=white]
\tikzstyle{ddmapadj}=[draw,doubled,dashed,shape=SEbox,inner sep=2pt,minimum height=6mm,fill=white]
\tikzstyle{ddmaptrans}=[draw,doubled,dashed,shape=SWbox,inner sep=2pt,minimum height=6mm,fill=white]
\tikzstyle{ddmapconj}=[draw,doubled,dashed,shape=NWbox,inner sep=2pt,minimum height=6mm,fill=white]

\boxshape{sNEbox}{0pt}{3pt}
\boxshape{sSEbox}{0pt}{-3pt}
\boxshape{sNWbox}{3pt}{0pt}
\boxshape{sSWbox}{-3pt}{0pt}
\tikzstyle{smap}=[draw,shape=sNEbox,fill=white]
\tikzstyle{smapdag}=[draw,shape=sSEbox,fill=white]
\tikzstyle{smapadj}=[draw,shape=sSEbox,fill=white]
\tikzstyle{smaptrans}=[draw,shape=sSWbox,fill=white]
\tikzstyle{smapconj}=[draw,shape=sNWbox,fill=white]

\tikzstyle{dsmap}=[draw,dashed,shape=sNEbox,fill=white]
\tikzstyle{dsmapdag}=[draw,dashed,shape=sSEbox,fill=white]
\tikzstyle{dsmaptrans}=[draw,dashed,shape=sSWbox,fill=white]
\tikzstyle{dsmapconj}=[draw,dashed,shape=sNWbox,fill=white]

\boxshape{mNEbox}{0pt}{10pt}
\boxshape{mSEbox}{0pt}{-10pt}
\boxshape{mNWbox}{10pt}{0pt}
\boxshape{mSWbox}{-10pt}{0pt}
\tikzstyle{mmap}=[draw,shape=mNEbox]
\tikzstyle{mmapdag}=[draw,shape=mSEbox]
\tikzstyle{mmaptrans}=[draw,shape=mSWbox]
\tikzstyle{mmapconj}=[draw,shape=mNWbox]

\tikzstyle{mmapgray}=[draw,fill=gray!40!white,shape=mNEbox]
\tikzstyle{smapgray}=[draw,fill=gray!40!white,shape=sNEbox]

\makeatletter

\pgfdeclareshape{cornerpoint}{
\inheritsavedanchors[from=rectangle] % this is nearly a rectangle
\inheritanchorborder[from=rectangle]
\inheritanchor[from=rectangle]{center}
\inheritanchor[from=rectangle]{north}
\inheritanchor[from=rectangle]{south}
\inheritanchor[from=rectangle]{west}
\inheritanchor[from=rectangle]{east}
% ... and possibly more
\backgroundpath{% this is new
% store lower right in xa/ya and upper right in xb/yb
\southwest \pgf@xa=\pgf@x \pgf@ya=\pgf@y
\northeast \pgf@xb=\pgf@x \pgf@yb=\pgf@y

\pgfmathsetmacro{\pgf@shorten@left}{\pgfkeysvalueof{/tikz/shorten left}}
\pgfmathsetmacro{\pgf@shorten@right}{\pgfkeysvalueof{/tikz/shorten right}}

\pgfpathmoveto{\pgfpoint{0.5 * (\pgf@xa + \pgf@xb)}{\pgf@ya - 5pt}}
\pgfpathlineto{\pgfpoint{\pgf@xa - 8pt + \pgf@shorten@left}{\pgf@yb - 1.5 * \pgf@shorten@left}}
\pgfpathlineto{\pgfpoint{\pgf@xa - 8pt + \pgf@shorten@left}{\pgf@yb}}
\pgfpathlineto{\pgfpoint{\pgf@xb + 8pt - \pgf@shorten@right}{\pgf@yb}}
\pgfpathlineto{\pgfpoint{\pgf@xb + 8pt - \pgf@shorten@right}{\pgf@yb - 1.5 * \pgf@shorten@right}}
\pgfpathclose
}
}

\pgfdeclareshape{cornercopoint}{
\inheritsavedanchors[from=rectangle] % this is nearly a rectangle
\inheritanchorborder[from=rectangle]
\inheritanchor[from=rectangle]{center}
\inheritanchor[from=rectangle]{north}
\inheritanchor[from=rectangle]{south}
\inheritanchor[from=rectangle]{west}
\inheritanchor[from=rectangle]{east}
% ... and possibly more
\backgroundpath{% this is new
% store lower right in xa/ya and upper right in xb/yb
\southwest \pgf@xa=\pgf@x \pgf@ya=\pgf@y
\northeast \pgf@xb=\pgf@x \pgf@yb=\pgf@y

\pgfmathsetmacro{\pgf@shorten@left}{\pgfkeysvalueof{/tikz/shorten left}}
\pgfmathsetmacro{\pgf@shorten@right}{\pgfkeysvalueof{/tikz/shorten right}}

\pgfpathmoveto{\pgfpoint{0.5 * (\pgf@xa + \pgf@xb)}{\pgf@yb + 5pt}}
\pgfpathlineto{\pgfpoint{\pgf@xa - 8pt + \pgf@shorten@left}{\pgf@ya + 1.5 * \pgf@shorten@left}}
\pgfpathlineto{\pgfpoint{\pgf@xa - 8pt + \pgf@shorten@left}{\pgf@ya}}
\pgfpathlineto{\pgfpoint{\pgf@xb + 8pt - \pgf@shorten@right}{\pgf@ya}}
\pgfpathlineto{\pgfpoint{\pgf@xb + 8pt - \pgf@shorten@right}{\pgf@ya + 1.5 * \pgf@shorten@right}}
\pgfpathclose
}
}

\makeatother

\pgfkeyssetvalue{/tikz/shorten left}{0pt}
\pgfkeyssetvalue{/tikz/shorten right}{0pt}

\tikzstyle{kpoint common}=[draw,fill=white,inner sep=1pt,minimum height=4mm]
\tikzstyle{kpoint sc}=[shape=cornerpoint,kpoint common]
\tikzstyle{kpoint adjoint sc}=[shape=cornercopoint,kpoint common]
\tikzstyle{kpoint}=[shape=cornerpoint,shorten left=5pt,kpoint common]
\tikzstyle{kpoint adjoint}=[shape=cornercopoint,shorten left=5pt,kpoint common]
\tikzstyle{kpoint conjugate}=[shape=cornerpoint,shorten right=5pt,kpoint common]
\tikzstyle{kpoint transpose}=[shape=cornercopoint,shorten right=5pt,kpoint common]
\tikzstyle{kpoint symm}=[shape=cornerpoint,shorten left=5pt,shorten right=5pt,kpoint common]

\tikzstyle{wide kpoint sc}=[shape=cornerpoint,kpoint common, minimum width=1 cm]
\tikzstyle{wide kpointdag sc}=[shape=cornercopoint,kpoint common, minimum width=1 cm]

\tikzstyle{black kpoint}=[shape=cornerpoint,shorten left=5pt,kpoint common,fill=black,font=\color{white}]

\tikzstyle{black kpoint sm}=[shape=cornerpoint,shorten left=5pt,kpoint common,fill=black,font=\color{white},scale=0.75]

\tikzstyle{black kpoint adjoint}=[shape=cornercopoint,shorten left=5pt,kpoint common,fill=black,font=\color{white}]
\tikzstyle{black kpointadj}=[shape=cornercopoint,shorten left=5pt,kpoint common,fill=black,font=\color{white}]

\tikzstyle{black kpointadj sm}=[shape=cornercopoint,shorten left=5pt,kpoint common,fill=black,font=\color{white},scale=0.75]

\tikzstyle{black dkpoint}=[shape=cornerpoint,shorten left=5pt,kpoint common,fill=black, doubled,font=\color{white}]
\tikzstyle{black dkpoint adjoint}=[shape=cornercopoint,shorten left=5pt,kpoint common,fill=black, doubled,font=\color{white}]
\tikzstyle{black dkpointadj}=[shape=cornercopoint,shorten left=5pt,kpoint common,fill=black, doubled,font=\color{white}]

\tikzstyle{black dkpoint sm}=[shape=cornerpoint,shorten left=5pt,kpoint common,fill=black, doubled,font=\color{white},scale=0.75]
\tikzstyle{black dkpointadj sm}=[shape=cornercopoint,shorten left=5pt,kpoint common,fill=black, doubled,font=\color{white},scale=0.75]

\tikzstyle{kpointdag}=[kpoint adjoint]
\tikzstyle{kpointadj}=[kpoint adjoint]
\tikzstyle{kpointconj}=[kpoint conjugate]
\tikzstyle{kpointtrans}=[kpoint transpose]

\tikzstyle{big kpoint}=[kpoint, minimum width=1.2 cm, minimum height=8mm, inner sep=4pt, text depth=3mm]

\tikzstyle{wide kpoint}=[kpoint, minimum width=1 cm, inner sep=2pt]%, text depth=-0.7 mm]
\tikzstyle{wide kpointdag}=[kpointdag, minimum width=1 cm, inner sep=2pt]%, text depth=0.7 mm]
\tikzstyle{wide kpointconj}=[kpointconj, minimum width=1 cm, inner sep=2pt]%, text depth=-0.7 mm]
\tikzstyle{wide kpointtrans}=[kpointtrans, minimum width=1 cm, inner sep=2pt]%, text depth=0.7 mm]

\tikzstyle{wider kpoint}=[kpoint, minimum width=1.25 cm, inner sep=2pt]%, text depth=-0.7 mm]
\tikzstyle{wider kpointdag}=[kpointdag, minimum width=1.25 cm, inner sep=2pt]%, text depth=0.7 mm]
\tikzstyle{wider kpointconj}=[kpointconj, minimum width=1.25 cm, inner sep=2pt]%, text depth=-0.7 mm]
\tikzstyle{wider kpointtrans}=[kpointtrans, minimum width=1.25 cm, inner sep=2pt]%, text depth=0.7 mm]

\tikzstyle{gray kpoint}=[kpoint,fill=gray!50!white]
\tikzstyle{gray kpointdag}=[kpointdag,fill=gray!50!white]
\tikzstyle{gray kpointadj}=[kpointadj,fill=gray!50!white]
\tikzstyle{gray kpointconj}=[kpointconj,fill=gray!50!white]
\tikzstyle{gray kpointtrans}=[kpointtrans,fill=gray!50!white]

\tikzstyle{gray dkpoint}=[kpoint,fill=gray!50!white,doubled]
\tikzstyle{gray dkpointdag}=[kpointdag,fill=gray!50!white,doubled]
\tikzstyle{gray dkpointadj}=[kpointadj,fill=gray!50!white,doubled]
\tikzstyle{gray dkpointconj}=[kpointconj,fill=gray!50!white,doubled]
\tikzstyle{gray dkpointtrans}=[kpointtrans,fill=gray!50!white,doubled]

\tikzstyle{white label}=[draw,fill=white,rectangle,inner sep=0.7 mm]
\tikzstyle{gray label}=[draw,fill=gray!50!white,rectangle,inner sep=0.7 mm]
\tikzstyle{black label}=[draw,fill=black,rectangle,inner sep=0.7 mm]

\tikzstyle{dkpoint}=[kpoint,doubled]
\tikzstyle{wide dkpoint}=[wide kpoint,doubled]
\tikzstyle{dkpointdag}=[kpoint adjoint,doubled]
\tikzstyle{wide dkpointdag}=[wide kpointdag,doubled]
\tikzstyle{dkcopoint}=[kpoint adjoint,doubled]
\tikzstyle{dkpointadj}=[kpoint adjoint,doubled]
\tikzstyle{dkpointconj}=[kpoint conjugate,doubled]
\tikzstyle{dkpointtrans}=[kpoint transpose,doubled]

\tikzstyle{kscalar}=[kpoint common, shape=EBox, inner xsep=-1pt, inner ysep=3pt,font=\small]
\tikzstyle{kscalarconj}=[kpoint common, shape=WBox, inner xsep=-1pt, inner ysep=3pt,font=\small]

\tikzstyle{spekpoint}=[kpoint sc,minimum height=5mm,inner sep=3pt]
\tikzstyle{spekcopoint}=[kpoint adjoint sc,minimum height=5mm,inner sep=3pt]

\tikzstyle{dspekpoint}=[spekpoint,doubled]
\tikzstyle{dspekcopoint}=[spekcopoint,doubled]

% ========================
% = GROUND =
% ========================

 \tikzstyle{upground}=[circuit ee IEC,thick,ground,rotate=90,scale=2.5]
 \tikzstyle{downground}=[circuit ee IEC,thick,ground,rotate=-90,scale=2.5]
 \tikzstyle{infupground}=[circuit ee IEC,thick,ground,rotate=0,scale=2.5]
 \tikzstyle{infdownground}=[circuit ee IEC,thick,ground,rotate=180,scale=2.5]
 %\tikzstyle{ground}=[regular polygon,regular polygon sides=3,draw=gray,scale=0.50,inner sep=-0.5pt,minimum width=5mm,fill=gray]
 \tikzstyle{bigground}=[regular polygon,regular polygon sides=3,draw=gray,scale=0.50,inner sep=-0.5pt,minimum width=10mm,fill=gray]
 %\tikzstyle{grounddag}=[regular polygon,regular polygon sides=3,draw=gray,scale=0.50,inner sep=-0.5pt,minimum width=5mm,fill=gray,regular polygon rotate=180]

% ========================
% = COMMUTATIVE DIAGRAMS =
% ========================

\tikzstyle{arrs}=[-latex,font=\small,auto]
\tikzstyle{arrow plain}=[arrs]
\tikzstyle{arrow dashed}=[dashed,arrs]
\tikzstyle{arrow bold}=[very thick,arrs]
\tikzstyle{arrow hide}=[draw=white!0,-]
\tikzstyle{arrow reverse}=[latex-]
\tikzstyle{cdnode}=[]

% ===========================
% = DIAGRAM-PRESERVING MAPS =
% ===========================

\tikzstyle{tilde}=[draw=blue]
\tikzstyle{tildelabel}=[text=blue]

\let\olddagger\dagger
\renewcommand{\dagger}{\ensuremath{\olddagger}\xspace}

% indexes
% uncomment the relevant set of commands

% for a single index
\usepackage{makeidx}
\makeindex

% for multiple indexes using multind.sty
  % \usepackage{multind}\ProvidesPackage{multind}
  % \makeindex{authors}
  % \makeindex{subject}

% for multiple indexes using index.sty
% \usepackage{index}
% \newindex{aut}{adx}{and}{Author index}
% \makeindex

%\newcommand\cambridge{cambridge6A}

% OURS
\theoremstyle{plain}
\newtheorem*{main theorem}{Main Theorem}
\newtheorem{theorem}{Theorem}[section]

\newtheorem{lemma}[theorem]{Lemma}

\newtheorem{definition}[theorem]{Definition}

\newtheorem{example*}[theorem]{Example*}
\newtheorem{examples*}[theorem]{Examples*}

\newtheorem{remark*}[theorem]{Remark*}

\newtheorem*{search problem}{Search Problem}

%\newtheoremstyle{exercise}{3pt}{3pt}{\color{red}}{}{\bf}{}{.5em}{}
%\theoremstyle{exercise}

\hyphenation{line-break line-breaks docu-ment triangle cambridge amsthdoc
  cambridgemods baseline-skip author authors cambridgestyle en-vir-on-ment polar}

%% helper macros

%begin Bob's
\usepackage{color}
\def\bR{\begin{color}{red}}
\def\bB{\begin{color}{blue}}
\def\bM{\begin{color}{magenta}}
\def\bC{\begin{color}{cyan}}
\def\bW{\begin{color}{white}}
\def\bBl{\begin{color}{black}}
\def\bG{\begin{color}{green}}
\def\bY{\begin{color}{yellow}}
\def\e{\end{color}\xspace}
\newcommand{\bit}{\begin{itemize}}
\newcommand{\eit}{\end{itemize}\par\noindent}
\newcommand{\ben}{\begin{enumerate}}
\newcommand{\een}{\end{enumerate}\par\noindent}
\newcommand{\beq}{\begin{equation}}
\newcommand{\eeq}{\end{equation}\par\noindent}
\newcommand{\beqa}{\begin{eqnarray*}}
\newcommand{\eeqa}{\end{eqnarray*}\par\noindent}
\newcommand{\beqn}{\begin{eqnarray}}
\newcommand{\eeqn}{\end{eqnarray}\par\noindent}
%end Bob's

% hide certain colours

% \def\bR{\begin{color}{black}}
% \def\bB{\begin{color}{black}}
% \def\bM{\begin{color}{black}}
% \def\bC{\begin{color}{black}}
% \def\bW{\begin{color}{black}}
% \def\bG{\begin{color}{black}}
% \def\bY{\begin{color}{black}}

\def\jR{\begin{color}{black}}
\def\jB{\begin{color}{black}}
\def\jM{\begin{color}{magenta}}
\def\jC{\begin{color}{cyan}}
\def\jW{\begin{color}{white}}
\def\jBl{\begin{color}{black}}
\def\jG{\begin{color}{green}}
\def\jY{\begin{color}{yellow}}

% hide certain colours

% \def\jR{\begin{color}{black}}
% \def\jB{\begin{color}{black}}
% \def\jM{\begin{color}{black}}
% \def\jC{\begin{color}{black}}
% \def\jW{\begin{color}{black}}
% \def\jG{\begin{color}{black}}
% \def\jY{\begin{color}{black}}

%\newcommand{\john}[1]{{\color{blue} #1}}

\newcommand{\xiNC}{\xi_{\rm{\kern -0.8pt n \kern -0.7pt c}}}
\usepackage{soul}
%************************
\begin{document}
\title{Noncontextual ontological models of operational probabilistic theories}
\author{Sina Soltani}
\email{sina.soltani@phdstud.ug.edu.pl}
%\affiliation{
%International Centre for Theory of Quantum Technologies, University of Gda\'nsk, 80-308 Gda\'nsk, Poland}
\affiliation{International Centre for Theory of Quantum Technologies, Uniwersytet Gdański, ul.~Jana Bażyńskiego 1A, 80-309 Gdańsk, Polska}
\author{Marco Erba
%\orcidlink{0000-0002-2172-592X}
}
%\affiliation{
%International Centre for Theory of Quantum Technologies, University of Gda\'nsk, 80-308 Gda\'nsk, Poland}
\affiliation{International Centre for Theory of Quantum Technologies, Uniwersytet Gdański, ul.~Jana Bażyńskiego 1A, 80-309 Gdańsk, Polska}
\author{David Schmid}
\affiliation{Perimeter Institute for Theoretical Physics, 31 Caroline Street North, Waterloo, Ontario Canada N2L 2Y5}
\author{John H. Selby}
%\affiliation{
%International Centre for Theory of Quantum Technologies, University of Gda\'nsk, 80-308 Gda\'nsk, Poland}
\affiliation{International Centre for Theory of Quantum Technologies, Uniwersytet Gdański, ul.~Jana Bażyńskiego 1A, 80-309 Gdańsk, Polska}
\affiliation{Theoretical Sciences Visiting Program, Okinawa Institute of Science
and Technology Graduate University, Onna, 904-0495, Japan}
\begin{abstract}
An experiment or theory is classically explainable if it can be reproduced by some noncontextual ontological model. In this work, we adapt the notion of ontological models and generalized noncontextuality so it applies to the framework of operational probabilistic theories (OPTs).  A defining feature of quotiented OPTs, which sets them apart from the closely related framework of generalized probabilistic theories (GPTs), is their explicit specification of the structure of instruments, these being generalizations of \emph{quantum instruments} (including nondestructive measurements); in particular, one needs to explicitly declare which collections of transformations constitute a valid instrument. We are particularly interested in strongly causal OPTs, in which the choice of a future instrument can be conditioned on a past measurement outcome. This instrument structure might seem to permit the possibility of a contextual kind of ontological representation, where the representation of a given transformation depends on which instrument it is considered a part of. However, we prove that this is not possible by showing that for strongly causal quotiented OPTs the structure of instruments does {\em not} allow for such a contextual ontological representation.  It follows that ontological representations of strongly causal quotiented OPTs are entirely determined by their action on individual transformations, with no dependence on the structure of instruments. 

\end{abstract}
\maketitle

{
  \hypersetup{linkcolor=purple}
  \tableofcontents
}

\section{Introduction}

If an experiment or theory can be reproduced within a generalized-noncontextual ontological model, then it is said to be classically-explainable. This approach can be motivated by a methodological version of Leibniz's principle~\cite{Leibniz}, by its equivalence to the existence of some positive quasiprobabilistic representation~\cite{negativity,Schmid2024structuretheorem}, and by its equivalence to the notion of classical-explainability arising in the framework of generalized probabilistic theories~\cite{SchmidGPT,Schmid2024structuretheorem}. By characterizing phenomena that cannot be reproduced in any such model, one learns what features of quantum theory are genuinely difficult to explain, and consequently what features are likely to provide quantum advantages in information processing and other physical protocols. Such phenomena have been found in the study of quantum computation~\cite{Schmid2022Stabilizer,shahandeh2021quantum}, state discrimination~\cite{schmid2018contextual,flatt2021contextual,mukherjee2021discriminating,Shin2021}, interference~\cite{Catani2023whyinterference,catani2022reply,catani2023aspects,giordani2023experimental}, compatibility~\cite{d2020classicality,selby2023incompatibility,selby2023accessible,PhysRevA.109.022239,PhysRevResearch.2.013011}, uncertainty relations~\cite{catani2022nonclassical}, metrology~\cite{contextmetrology}, thermodynamics~\cite{contextmetrology,comar2024contextuality,lostaglio2018}, weak values~\cite{AWV, KLP19}, coherence~\cite{rossi2023contextuality,Wagner2024coherence, wagner2024inequalities}, entanglement and the composition postulate~\cite{d2020classical,d2020classicality,erba2024compositionrulequantumsystems}, quantum Darwinism~\cite{baldijao2021noncontextuality}, information processing and communication~\cite{POM,RAC,RAC2,Saha_2019,Yadavalli2020,PhysRevLett.119.220402,fonseca2024robustness}, cloning~\cite{cloningcontext}, broadcasting~\cite{jokinen2024nobroadcasting}, pre- and post-selection paradoxes~\cite{PP1}, randomness certification~\cite{Roch2021}, psi-epistemicity~\cite{Leifer}, and Bell~\cite{Wright2023invertible,schmid2020unscrambling} and Kochen-Specker scenarios~\cite{operationalks,kunjwal2018from,Kunjwal16,Kunjwal19,Kunjwal20,specker,Gonda2018almostquantum}. 

Recently, the study of ontological models~\cite{harrigan2010einstein} and of generalized noncontextuality~\cite{spekkens2005contextuality} was reformulated~\cite{SchmidGPT,Schmid2024structuretheorem,ShahandehGPT,muller2023testing} in the language of generalized probabilistic theories~\cite{hardy,GPT_Barrett,janotta2013generalized} and in the newly developed framework of causal-inferential theories \cite{schmid2020unscrambling}. A GPT (as presented, e.g., in~\cite{Schmid2024structuretheorem}) is best thought of as a quotiented operational theory, where transformations constitute equivalence classes of operational procedures which cannot be distinguished by any experiment that could be performed within the theory. As was shown in Ref.~\cite{Schmid2024structuretheorem}, it follows that an ontological model of a GPT cannot be said to be either contextual or noncontextual, as a quotiented operational theory simply does not have any contexts on which one could ask whether or not a representation depends. Still, it was also shown in Ref.~\cite{Schmid2024structuretheorem} that there is a one-to-one correspondence between ontological models for a GPT and {\em noncontextual} ontological models for the operational theory that quotients to give that GPT.

A closely related framework is that of operational probabilistic theories, or OPTs~\cite{d2017quantum,chiribella2010probabilistic,chiribella2011}, and so it is natural to ask how to formulate ontological models and generalized noncontextuality also within this framework.

The framework of OPTs allows for both quotiented and unquotiented theories. Roughly speaking, an unquotiented OPT is analogous to an operational theory in the sense of Harrigan-Spekkens~\cite{harrigan2010einstein}, while a quotiented OPT is analogous to a GPT. However, quotiented OPTs have a key difference when compared to GPTs (at least as they are often presented, e.g., in~\cite{Schmid2024structuretheorem}), as the former includes an explicit formal structure capturing how certain collections of transformations form instruments. In the language of quantum theory, this would be a specification of which collections of completely-positive trace-nonincreasing maps form valid quantum instruments.\footnote{In a GPT, at least as presented in Ref.~\cite{Schmid2024structuretheorem}, one rather treats such a collection of transformations as a single process with a classical input and/or output that ranges over the elements of the collection.} One then intuitively expects that this extra structure enables the possibility of contextual representations. After all, the same transformation can appear in different instruments, and so this ``which instrument'' information constitutes a context upon which a representation could seemingly depend. Contrary to this intuition, we demonstrate that this is not the case for (quotiented) OPTs where the choice of a future instrument can be conditioned on a past measurement outcome (these are called \emph{strongly causal} OPTs~\cite{Rolino_2025}). What we show, in fact, is that an ontological model of a strongly causal quotiented OPT is entirely represented by its action on the transformations, and that this lifts to a representation of instruments in the obvious way (where the action on an instrument is given simply by its action on each of the individual transformations comprising it). Consequently, even if one takes into account the OPT instrument structure, ontological representations of strongly causal quotiented OPTs are found to be always noncontextual with respect to this structure.

An important consequence of this is that one can define noncontextuality of ontological models of {\em unquotiented} OPTs in terms of whether they factorize through the quotiented version of the OPT (analogous to what was shown for GPTs in Ref.~\cite{Schmid2024structuretheorem} and for causal-inferential theories in Ref.~\cite{schmid2020unscrambling}). 

This paper first provides a brief introduction to the OPT framework, illustrating how quotiented OPTs result in a linear structure and the conventional summation rule for coarse-graining. We subsequently introduce the notion of strong causality, which plays a key role for the following sections, and we conclude this introduction with a concise review of strictly classical OPTs.

In the subsequent section, we begin by introducing the notion of a general map between two arbitrary OPTs, with a particular focus on how such maps must accommodate the structure of instruments (or more generally, tests). We then define various properties that such maps can have, including outcome preservation, diagram preservation, coarse-graining preservation, and conditioning preservation. With these constraints in place, we define ontological models as maps from strongly causal OPTs to classical OPTs, that satisfy all the aforementioned properties. We then present a specific class of ontological models, namely noncontextual ontological models, and discuss why---unlike in the GPT framework---it might appear possible to have contextual representations for quotiented OPTs. Finally, we prove that this seeming possibility is illusory, and we demonstrate that the context induced by the structure of instruments in strongly causal quotiented OPTs can be disregarded without loss of generality. This allows ontological models to be defined solely at the level of transformations, ultimately establishing a one-to-one correspondence between noncontextual ontological models of strongly causal unquotiented OPTs and ontological models of strongly causal quotiented OPTs.

\section{Introduction to operational probabilistic theories}

\subsection{Primitives and compositional structure}
The primitive notions in an operational probabilistic theory are systems, events, tests, and probabilities. A system refers to physical objects that can be investigated in a laboratory, such as elementary particles, atoms, or radiation fields. An event describes some evolution taking one physical system to another, while a test is a collection of events with an outcome space representing possible evolutions that might occur in some probabilistic process; finally, an operational probabilistic theory must also specify how to combine these primitive elements to account for the probabilities of experimental outcomes.

We use \textbf{Sys($\Theta$)}, \textbf{Event($\Theta$)}, \textbf{Test($\Theta$)} to denote respectively the classes of systems, events, and tests in an OPT, $\Theta$. Systems will be denoted by uppercase Latin characters A, B, C, $\ldots$ $\in$ \textbf{Sys($\Theta$)}. The class of events from a system A to a system B will be denoted by  \textbf{Event}($\text{A}\rightarrow\text{B} $). Events t $\in$ \textbf{Event}($\text{A}\rightarrow\text{B} $) are represented as wired boxes, with the input-output direction represented as going from left to right:
\begin{equation}
\tikzfig{Diagramss/36_event_t}\;.
\end{equation}
A test with an outcome space X is denoted by $\text{T}_{\text{X}}^{\text{A}\rightarrow \text{B}}\in $  \textbf{Test}($\text{A}\rightarrow\text{B} )$. Each test  $\text{T}_{\text{X}}^{\text{A}\rightarrow \text{B}} $ is an indexed family\footnote{Note that another  approach in the OPT framework is to treat tests as multisets, which can be particularly useful when discussing the composition rules for tests. However, in this paper, we adopt the convention of defining tests as indexed families, as this choice proves more useful for formalising conditional instruments and ensuring they are well defined in the later sections. It is important to note, though, that using indexed families introduces certain subtleties in how test composition is handled. Conversely, while multisets allow for a straightforward definition of composition, they can introduce complications in the formal treatment of conditional instruments. For more detailed discussion, see the forthcoming work~\cite{erba2025categorical}.} of events from A to B, namely, 
\begin{equation}
\text{T}_{\text{X}}^{\text{A}\rightarrow \text{B}}=[ \text{t}_{x} ]_{\text{x}\in \text{X}}; \;\;\;\forall \text{x}\in \text{X},\;\;\text{t}_{\text{x}}\in\textbf{Event}(\text{A}\rightarrow \text{B}) .
\end{equation} 
Such a test is represented pictorially as
\begin{equation}
\tikzfig{Diagramss/36_test}\;.
\end{equation}
When clear from context, we may denote the test $\text{T}_{\text{X}}^{\text{A}\rightarrow \text{B}}$ as  $\text{T}_{\text{X}}$, or simply as T. A test with a single outcome will be called a \textit{singleton test}, and will have an outcome set denoted by $\star :=\{ \ast\}$. Operational probabilistic theories also incorporate sequential composition. Specifically, it is possible to arrange two tests in succession such that the outcome system of the first test serves as the input system for the second test (provided these systems are of the same type). That is, for all A, B, C $\in$ \textbf{Sys}($\Theta $) and all $\text{t}_{1}\in$ \textbf{Event}($\text{A}\rightarrow\text{B} $), $\text{t}_{2}\in$ \textbf{Event}($\text{B}\rightarrow\text{C} $), there exists an associative map $\circ$, called \textit{sequential composition}, and an event $\text{t}_{2}\circ\text{t}_{1}\in$ \textbf{Events}($\text{A}\rightarrow\text{C} $), such that the following sequential composition rule is defined for the two tests $\text{T}_{\text{X}}^{\text{A}\rightarrow\text{B}}$, $\text{T}_{\text{Y}}^{'\text{B}\rightarrow\text{C}}$:
\begin{equation}
(\text{T}^{'}\circ\text{T})_{\text{X}\times\text{Y}}^{\text{A}\rightarrow\text{C}}\equiv \text{T}^{'\text{B}\rightarrow \text{C}}_{\text{Y}}\circ \text{T}_{\text{X}}^{\text{A}\rightarrow \text{B}}:=[\text{t}^{'}_{y}\circ\text{t}_{x}]_{(\text{x},\text{y})\in\text{X}\times\text{Y}} \;.
\end{equation}
Such a sequential composition of two events is depicted as 
\begin{equation}
\tikzfig{Diagramss/37_seqcomp1}=\tikzfig{Diagramss/37_seqcomp}\; .
\end{equation}

To allow the OPT framework to encompass the possibility of doing nothing on systems we introduce an \textit{identity} event, $I_{\text{S}}$, for each system $\text{S}\in\textbf{Sys}(\Theta)$, as well as the (singleton) identity test, $\textbf{I}_{\star}^{\text{S}\rightarrow\text{S}}:=[I_{\text{S}}]$. Since the identity event represents doing nothing on the systems, it can be depicted as
\begin{equation}
\tikzfig{Diagramss/38_id}\;.
\end{equation} 
This, in particular, implies that the identity events satisfy 
\begin{equation}
I_{\text{B}}\circ\text{t}_{\text{x}}=\text{t}_{\text{x}}=\text{t}_{\text{x}}\circ I_{\text{A}},\;\;\;\forall\;\text{T}_{\text{X}},\;\;\forall\;\text{t}_{\text{x}}\in\text{T}_{\text{X}}^{\text{A}\rightarrow\text{B}},
\end{equation}
for all $\text{A}$ and $\text{B}$.

One can put two systems A and B in parallel to make the composite system AB, depicted as
\begin{equation}
\tikzfig{Diagramss/38_id1}\;.
\end{equation}
We define the \textit{parallel composition} map to combine two individual events into a single composite event, where the input (output) system of the composite event is given by the parallel composition of the inputs (outputs) of each individual event. Specifically, for two arbitrary events, $\text{t}_{1} \in$ \textbf{Event}($\text{A} \rightarrow \text{B}$) and $\text{t}_{2} \in$ \textbf{Event}($\text{C} \rightarrow \text{D}$), we define the parallel composition map $\boxtimes$ as an associative map that produces the composite event $\text{t}_{1} \boxtimes \text{t}_{2} \in$ \textbf{Event}($\text{AC} \rightarrow \text{BD}$). One can use the parallel composition map to define the parallel composition of tests via
\begin{equation}
(\text{T}\boxtimes\text{T}^{'})_{\text{X}\times\text{Y}}^{\text{AC}\rightarrow\text{BD}}\equiv\text{T}_{\text{X}}^{\text{A}\rightarrow\text{B}}\boxtimes\text{T}^{'\text{C}\rightarrow \text{D}}_{\text{Y}}:=[\text{t}_{x}\boxtimes \text{t}^{'}_{y}]_{(\text{x},\text{y})\in\text{X}\times\text{Y}}\;.
\end{equation}
This map is pictorially represented as 
\begin{equation}
\tikzfig{Diagramss/39_parcomp}=\tikzfig{Diagramss/39_parcomp1}=\tikzfig{Diagramss/39_parcomp2}\;\;.
\end{equation}
The $\circ$ and $\boxtimes$ maps must jointly satisfy:
\begin{equation}
\left(\text{t}_{1}\boxtimes \text{t}_{3}\right)\circ\left(\text{t}_{2}\boxtimes \text{t}_{4}\right)=\left(\text{t}_{1}\circ \text{t}_{2}\right)\boxtimes\left(\text{t}_{3}\circ\text{t}_{4}\right),
\end{equation}
for all systems A, B, C, D, E, F $\in$ \textbf{Sys}($\Theta $) and all $\text{t}_{1}\in$ \textbf{Events}($\text{A}\rightarrow\text{B} $), $\text{t}_{2}\in$ \textbf{Events}($\text{B}\rightarrow\text{C} $), $\text{t}_{3}\in$ \textbf{Events}($\text{D}\rightarrow\text{E} $), $\text{t}_{4}\in$ \textbf{Events}($\text{E}\rightarrow\text{F} $).\\

One can consider physical processes without inputs or outcomes, corresponding to situations where either the input or output is omitted from the physical description. To address this, one requires the notion of a trivial system I $\in$ \textbf{Sys}($\Theta$), which satisfies IA $=$ A $=$ AI for all A $\in$ \textbf{Sys}($\Theta$). Composing I with any system is equivalent to doing nothing, so to represent the trivial system, it is sufficient to leave blank spaces. Thus, events $\rho$ $\in$ \textbf{Events}($\text{I}\rightarrow\text{A}$), and a $\in$ \textbf{Events}($\text{A}\rightarrow \text{I}$) can be respectively depicted as
\begin{equation}
\tikzfig{Diagramss/40_preobs}\;.
\end{equation}
These are respectively called \textit{preparations} and \textit{observations}. 

To describe exchanging systems, an OPT requires the notion of \textit{braiding}, namely, a family  $\textbf{S}$ of invertible singleton tests, defined in a way that for any two systems A and B, the braiding $\textbf{S}$ contains two tests, $\textbf{S}_{\star}^{\text{AB}\rightarrow\text{BA}}=[s_{\text{A}, \text{B}}]$, and its inverse $(\textbf{S}_{\star}^{-1})^{\text{AB}\rightarrow\text{BA}}=[ s^{-1}_{\text{A}, \text{B}}]$ that are denoted as
\begin{equation}
\label{swap}
\tikzfig{Diagramss/41_swap1}=\tikzfig{Diagramss/41_swap}
\end{equation}
\begin{equation}
\tikzfig{Diagramss/41_swap3}=\tikzfig{Diagramss/41_swap2}\;\;.
\end{equation}
Braiding is required to satisfy the sliding property, namely,
\begin{equation}
\tikzfig{Diagramss/41_swap4}=\tikzfig{Diagramss/41_swap5}\;,
\end{equation}
for all A, B, C, D $\in$ \textbf{Sys}($\Theta$) and all $\text{t}_{1}\in$ \textbf{Event}($\text{A}\rightarrow\text{B} $), $\text{t}_{2}\in$ \textbf{Event}($\text{C}\rightarrow\text{D} $). In the case where $\text{s}^{-1}_{\text{A},\text{B}}= \text{s}_{\text{B},\text{A}}$, the OPT is said to be symmetric.

An important aspect of the formalism of OPTs, is that tests rather than events are treated as primitive -- the structure of events can be derived from that of tests but not vice versa. This is critical to the notion of ontological model that we introduce here, as we will define these as maps which act on tests, in contrast to earlier works (such as Ref.~\cite{Schmid2024structuretheorem}) which defined them as maps which act on events. 

\subsection{Probabilistic structure, coarse-graining, quotienting, and linear properties}
Tests with both trivial input and output are probability distributions $\text{P}_{\text{X}}=[ p_{x}]\in$ \textbf{Tests}($\text{I}\rightarrow\text{I} $) over the outcome space, X. That is, \textbf{Events}($\text{I}\rightarrow\text{I}$) $= [0,1]$, and for any test $\text{P}_\text{X}$ we have $\sum_{x\in\text{X}}p_x=1$. We denote probability distributions pictorially as
\begin{equation} \tikzfig{Diagramss/62_prob}\;,
\quad \text{and}\quad
\tikzfig{Diagramss/62_prob1}\;,    
\end{equation}
but, often, we will drop the diamond shape around these for clarity.
Moreover, parallel composition is given simply by multiplication of real numbers
\begin{equation}
p\boxtimes q:=pq,\;\;\;\forall p,q\in \textbf{Events}(\text{I}\rightarrow \text{I}).
\end{equation}
In particular, this means that given some arbitrary preparation $\rho_{\text{x}}$ for system A, followed by an event $\text{t}_{y}$ from system A to system B, and eventually followed by an observation $\text{a}_{\text{z}}$, the theory provides a joint probability distribution to the scenario:
  \begin{equation}
  \label{probability}
\tikzfig{Diagramss/43_joinp}:=p\left(x,y,z\vert\rho_{\text{X}}, \text{T}_{\text{Y}}, \text{A}_{\text{Z}}\right).
  \end{equation}

For all $p\in$ \textbf{Events}($\text{I}\rightarrow\text{I} $) and all $\text{t}\in\textbf{Events}(\text{A}\rightarrow\text{B})$ one can define \textit{multiplication by scalars} via 
\begin{equation}
\label{scmul}
p\cdot\tikzfig{Diagramss/36_event_t}:=\tikzfig{Diagramss/42_scaler}=\tikzfig{Diagramss/42_scaler1}\;.
\end{equation}

It is possible for an agent to perform a test conflating the outcomes within an arbitrary subset of the outcome space of a test. In other words, one may ignore the individual occurrences within a subset of events. To incorporate this possibility into the OPT framework, we adopt the notation and use the results of~\cite{erba2025categorical}. Let $\mathcal{K}\left(\text{X}\right)=\lbrace\text{X}_{k}\rbrace_{k\in K}$ denote an arbitrary partition of X, that is $\bigcup_{k\in\text{K}}\text{X}_{k}=\text{X}$, and for any $k_{1}\neq k_{2}$ one has $\text{X}_{k_{1}}\cap\text{X}_{k_{2}}=\emptyset$. Additionally, let $\textbf{Part}(\text{X})$ denote the set of all such partitions of $\text{X}$. Then, for all tests $\text{T}_{\text{X}}^{\text{A}\rightarrow\text{B}}$ and all partitions $\mathcal{K}\left(\text{X}\right)\in\textbf{Part}(\text{X})$, there exists a \textit{coarse-graining map} $\mathfrak{C}_{\mathcal{K}\left(\text{X}\right)}$ defined by the action
\begin{equation}
\mathfrak{C}_{\mathcal{K}(\text{X})}\left(\text{T}_{\text{X}}^{\text{A}\rightarrow\text{B}}\right):=[\text{t}_{\text{X}_{k}}]_{k\in\text{K}}\in\textbf{Test}(\text{A}\rightarrow\text{B}),
\end{equation}
such that when an element of an arbitrary partition $\text{X}_{k}=\lbrace x_{k}\rbrace$  is a singleton set for some $ x_{k}\in\text{X}$, the coarse-grained event associated with $\text{X}_{k}$,  is identical to the original event corresponding to the outcome $x_{k}$, so
\begin{equation}
\label{singcg}
\text{t}_{\lbrace x_{k}\rbrace}:=\text{t}_{x_{k}},\;\forall x_{k}\in \text{X}, \forall\mathcal{K}(\text{X})\in\textbf{Part}(\text{X}):\lbrace x_{k}\rbrace\in\mathcal{K}(\text{X}).
\end{equation}
The coarse-graining map must preserve both sequential and parallel composition of tests. Specifically, for an arbitrary pair of partitions $\mathcal{K}(\text{X}) = \lbrace\text{X}_k\rbrace_{k \in \text{K}}$ and $\mathcal{L}(\text{Y}) = \lbrace \text{Y}_{l}\rbrace_{l \in \text{L}}$, and for all $k \in \text{K}$, $l \in \text{L}$, one can define
\begin{equation}
\label{cgcomp1}
\left(\text{X}\times\text{Y}\right)_{\left(k,l\right)}:=\lbrace\left(x,y\right)\in \text{X}\times\text{Y}\vert x\in \text{X}_{k},y\in\text{Y}_{l}\rbrace.
\end{equation}
This definition can be used to define the compositionally induced partition of $\text{X} \times \text{Y}$ as
\begin{equation}
\label{cgcomp2}
\mathcal{KL}\left(\text{X}\times\text{Y}\right):=\lbrace \left(\text{X}\times\text{Y}\right)_{(k,l)}\rbrace_{(k,l)\in\text{K}\times\text{L}}\;.
\end{equation}
Having defined the compositionally induced partition, the preservation of both sequential and parallel composition of arbitrary tests $\text{A}_{\text{X}}$, $\text{B}_{\text{Y}}$, and $\text{C}_{\text{Z}}$ under the partitions $\mathcal{K}(\text{X})$, $\mathcal{L}(\text{Y})$, and $\mathcal{M}(\text{Z})$ requires that
\begin{align}
\label{cgcomp3}
\mathfrak{C}_{\mathcal{KL}\left(\text{X}\times\text{Y}\right)}\left(\text{B}_{\text{Y}}\circ\text{A}_{\text{X}}\right)=\mathfrak{C}_{\mathcal{L}(\text{Y})}\left(\text{B}_{\text{Y}}\right)\circ\mathfrak{C}_{\mathcal{K}(\text{X})}\left(\text{A}_{\text{X}}\right),\nonumber\\
\mathfrak{C}_{\mathcal{KM}\left(\text{X}\times\text{Z}\right)}\left(\text{A}_{\text{X}}\boxtimes\text{C}_{\text{Z}}\right)=\mathfrak{C}_{\mathcal{K}(\text{X})}\left(\text{A}_{\text{X}}\right)\boxtimes\mathfrak{C}_{\mathcal{M}(\text{Z})}\left(\text{C}_{\text{Z}}\right).
\end{align}
These conditions ensure that the coarse-graining map behaves consistently with the compositional structure of the OPT framework. The coarse-graining map also induces a binary operation $\mathlarger{\mathlarger{\mathop{\curlyvee}}}$ between distinct events of the associated test (corresponding to different outcomes), namely,
\begin{equation}
\label{cgevent}
\text{t}_{x}\curlyvee\text{t}_{\tilde{x}}:=\text{t}_{\lbrace x\rbrace\cup {\lbrace \tilde{x}\rbrace}},\;\; \forall \text{t}_{x},\text{t}_{\tilde{x}}\in \text{T}_{\text{X}}^{\text{A}\rightarrow\text{B}},  x\neq \bar{x}. 
\end{equation}
The above operation $\mathlarger{\mathlarger{\mathop{\curlyvee}}}$ is both associative and commutative, and both sequential and parallel composition distribute over $\mathlarger{\mathlarger{\mathop{\curlyvee}}}$. The $\mathlarger{\mathlarger{\mathop{\curlyvee}}}$ operation corresponds to the coarse-graining that $\text{X}_{k}$ in the partition $\mathcal{K}(\text{X})$ contains two elements, and the event associated with $\text{t}_{\text{X}_{k}}$ is given by (\ref{cgevent}). When $\text{X}_{k}$ contains more than two elements, the $\mathlarger{\mathlarger{\mathop{\curlyvee}}}$ operation can be naturally extended. That is, for an arbitrary partition $\mathcal{K}(\text{X}) = \lbrace\text{X}_k\rbrace_{k \in \text{K}}$, the corresponding coarse-grained event associated with $\text{X}_{k}$, with the outcomes of $\text{X}_{k}$ being labeled as $ 1 \leq x \leq m $, is given by
\begin{equation}
\bigcurlyvee_{x\in X_k} \text{t}_x:=\text{t}_{{\text{X}}_{k}}=\text{t}_{1}\curlyvee \text{t}_{2}\curlyvee\ldots\curlyvee \text{t}_{m}\;.
\end{equation}
Note that a coarse-grained event only depends on the events that are merged together, thereby it is in fact independent from the coarse-graining context, namely, from the specific partition or outcomes involved in the coarse-graining.

At this point, we have only relied on operational considerations to define the coarse-graining map. Later, we will see that by considering scalar tests as probability distributions and using them to construct quotiented OPTs, a linear structure will be induced, which can be used to derive the standard summation rule for the coarse-graining map.

Having equipped the OPT framework with probabilities as in (\ref{probability}), it turns out that for scalar tests, i.e., probability distributions, the coarse-graining of events is given by the usual sum of real numbers. Namely, for all $\text{P}_{\text{X}}^{\text{I}\rightarrow\text{I}}=[ p_{x} ] _{x \in \text{X}} \in \textbf{Test}(\text{I}\rightarrow\text{I})$, and for every partition $\mathcal{K}(\text{X}) = \lbrace\text{X}_k\rbrace_{k \in \text{K}}\in\textbf{Part}(\text{X})$, one has
\begin{equation}
\label{cgprob}
\mathlarger{\mathlarger{\mathop{\curlyvee}}}_{x \in \text{X}_{k}} p_{x}=p_{\text{X}_{k}}=\sum_{x \in \text{X}_{k}}p_{x}.
\end{equation}
See \cite{erba2025categorical} for details.

Having a rule for assigning joint probabilities to events allows us to define an equivalence relation. In an OPT $\Theta$, two arbitrary events $\text{t}_{1},\text{t}_{2}\in\textbf{Events}(\text{A}\to\text{B})$, are said to be \emph{operationally equivalent}, denoted as $\text{t}_{1}\sim\text{t}_{2}$, if one has
\begin{equation}
\label{eqQ}
\tikzfig{Diagramss/44_eq}=\tikzfig{Diagramss/44_eq1}\;,
\end{equation}
For all E $\in\textbf{Sys}(\Theta)$, for all $\rho\in\textbf{Events}(\text{I}\rightarrow\text{AE})$, and for all $\text{a}\in\textbf{Events}(\text{BE}\rightarrow\text{I})$.

This allows us to define a \textit{quotiented OPT} by defining \textit{equivalence classes} of events:
\begin{equation}
\label{eqQ1}
\textbf{Transf}\left(\text{A}\rightarrow\text{B}\right):=\textbf{Events}\left(\text{A}\rightarrow\text{B}\right)/\sim\;.
\end{equation} 
These equivalence classes of events are called \textit{transformations} from system A to system B. Equivalence classes of preparations \textbf{St}(A) $:=\textbf{Transf}(\text{I}\rightarrow\text{A})$ are called the states of A, while the equivalence class of observations $\textbf{Eff}(A):=\textbf{Transf}(\text{A}\rightarrow
\text{I})$ are called the effects of A. In this regard, two tests $\text{T}^{\text{A}\rightarrow\text{B}}_{\text{X}}=[\text{t}_{x}]_{x\in\text{X}}$ and $\text{T}^{'\text{A}\rightarrow\text{B}}_{\text{X}}=[\text{t}^{'}_{x}]_{x\in\text{X}}$, for an arbitrary outcome space X, and for arbitrary systems A and B, are equivalent if for all $x\in\text{X}$, one has $\text{t}_{x}\sim\text{t}^{'}_{x}$. The equivalent classes of tests are called \textit{instruments} and are represented as $\textbf{Instr}(\text{A}\to \text{B})$. Sometimes, similar to tests, we denote the instrument $\text{T}_{\text{X}}^{\text{A}\rightarrow \text{B}}$ as T. States of an arbitrary system A are denoted as $\vert\rho)_{\text{A}}$, while effects are denoted as $(a\vert_{A}$. The parallel composition between states and effects is represented as $\vert\rho_{1})_{\text{A}_{1}}\vert\rho_{1})_{\text{A}_{1}}$ and  $(a_{1}\vert_{A_{1}}(a_{2}\vert_{A_{2}}$, and the sequential composition between a state and an effect is denoted as $(a\vert \rho)_{A}\in\left[0, 1\right]$. 

Starting from an arbitrary OPT $\Theta$, one can define a new OPT, $\tilde{\Theta}$, which has the same systems as in $\Theta$. In $\tilde{\Theta}$, the events and tests correspond to the transformations and instruments from $\Theta$. The identity in $\tilde{\Theta}$ for system $S$, denoted $\mathcal{I}_S$, is the equivalence class containing the identity $I_S$ from $\Theta$. Each event $t$ in $\Theta$ is represented in $\tilde{\Theta}$ by its equivalence class $\tilde{t}$. It follows directly that this equivalence relation is a congruence, and it is compatible with sequential and parallel composition, as well as coarse-graining~\cite{Schmid2024structuretheorem,erba2025categorical}. Consequently, the resulting quotiented OPT $\tilde{\Theta}$ is itself a valid OPT. From this point forward, we refer to $\Theta$ as an \textit{unquotiented} OPT and to $\tilde{\Theta}$ as a \textit{quotiented} OPT. Sometimes, when clear from the context, we use $\Theta$ (and generally letters without the tilde symbol) to refer to both unquotiented and quotiented OPTs.

Furthermore, considering (\ref{eqQ}), one has two important consequences. First, \textbf{St}(A) is a set of functionals from \textbf{Eff}(A) to the real interval $\left[0, 1\right]$ and vice versa. Second, the set of states is separating for the set of effects, and vice versa. That is, for every pair of states $\vert\rho)_{\text{A}}$ and $\vert\sigma)_{\text{A}}\in \textbf{St}(\text{A})$ that are distinct, $\vert\rho)_{\text{A}}\neq\vert\sigma)_{\text{A}}$, there exists an effect $(a\vert_{\text{A}}\in\textbf{Eff}(A)$ such that $(a\vert \rho)_{A}\neq(a\vert \sigma)_{A}$ (and similarly for every pair of different effects). These two consequences induce a linear structure; that is, the coupling between states and effects and the fact that they are separating for each other, can be used to define a complete class of linearly independent vectors in \textbf{St}(A) spanning a real vector space $\textbf{St}_{\mathbb{R}}(\text{A}):=\text{Span}_{\mathbb{R}}\textbf{St}(\text{A})$. Similarly, since \textbf{Eff}(A) is a class of non-negative linear functionals on \textbf{St}(A), it spans the dual space $\textbf{Eff}_{\mathbb{R}}(\text{A})=\text{span}_{\mathbb{R}}\textbf{Eff(A)}=\textbf{St}_{\mathbb{R}}(\text{A})^{*}$. The dimension $D_{\text{A}}:=\text{dim}\;\textbf{St}_{\mathbb{R}}$ will be called the dimension of system $A$. $D_{\text{A}}$ is the minimum number of probabilities that one must know to determine the state of a system A. In this work, we only consider finite-dimensional quotiented OPTs, for which one has $\text{dim}\;\textbf{St}_{\mathbb{R}}(\text{A})=\text{dim}\;\textbf{Eff}_{\mathbb{R}} (\text{A})$ for all systems A. 

Like with states and effects, transformations can also be viewed as living in vector spaces. That is, equation~(\ref{eqQ}) allows us to view transformations from $\text{A}$ to $\text{B}$ as functionals from state-effect pairs to probabilities, which similarly to the case of states, enables the assignment of a vector space spanned by transformations, namely $\textbf{Transf}_{\mathbb{R}}(\text{A}\rightarrow\text{B}):=\text{Span}_{\mathbb{R}}\;\textbf{Transf}(\text{A}\rightarrow\text{B})$.

Also as a consequence of~(\ref{eqQ}), every transformation $\text{t}\in \textbf{Transf}(\text{A} \rightarrow \text{B})$ naturally defines a linear map $\hat{\text{t}}_{\text{E}}$ from the vector space $\textbf{St}_\mathbb{R}(\text{AE})$ to $\textbf{St}_\mathbb{R}(\text{BE})$ for any system E. Every transformation 
t is therefore  characterized by the family of linear maps $\lbrace \hat{\text{t}}_{\text{E}}\rbrace_{ \text{E} \in \text{Sys}(\Theta)}$. That is, for any two transformations $\text{t}_{1}$ and $\text{t}_{2}$ $\in \textbf{Transf}(\text{A} \rightarrow\text{B}) $, one has
\begin{equation}
\text{t}_1=\text{t}_2\iff {\hat{\text{t}_1}}_{\text{E}}=\hat{\text{t}_2}_{\text{E}},\;\forall\text{E}\in\textbf{Sys}(\Theta).
\end{equation}
Note that, in general, the entire infinite family of linear maps $\lbrace \hat{\text{t}}_{\text{E}}\rbrace_{ \text{E} \in \text{Sys}(\Theta)}$, for all systems $\text{E}$, may be required to fully specify a transformation. However, under certain conditions, this may not be necessary. For example, in ``locally tomographic'' theories each transformation is fully determined by the single map associated with the local action, where the ancilla $\text{E}$ is trivial. 

Due to the compositional properties of the coarse-graining map (\ref{cgcomp1}-\ref{cgcomp3}), in any unquotiented OPT $\Theta$, one has
\begin{equation} 
\label{cgquo}
\tikzfig{Diagramss/56_cg4} = \mathlarger{\mathlarger{\mathop{\curlyvee}}}_{x \in \text{X}_{k}} \tikzfig{Diagramss/56_cg5}\;,
\end{equation}
for all systems A and B $\in\textbf{Sys}(\Theta)$, all tests $\text{T}_{\text{X}}^{\text{A}\rightarrow\text{B}}=[\text{t}_{\text{x}} ]_{\text{x}\in\text{X}}\in \textbf{Test}(\Theta)$, all partitions $\mathcal{K}(\text{X})=\lbrace \text{X}_{k}\rbrace _{k\in\text{K}}\in \textbf{Part}(\text{X})$, all systems E $\in\textbf{Sys}(\Theta)$, all $\rho\in\textbf{Events}(\text{I}\rightarrow\text{AE})$, and all $\text{a}\in\textbf{Events}(\text{BE}\rightarrow\text{I})$, for the coarse-grained event $\text{t}_{\text{X}_{k}}$. To see this, consider an arbitrary test T with the outcome space X, and without loss of generality, assume $\rho$ and $a$ are singleton preparation and observation. Then, the outcome space for the scalar of the form $a_{\text{BE}}\circ \left(T_{\text{X}}^{\text{A}\rightarrow\text{B}}\boxtimes\mathcal{I}_{\text{E}}\right)\circ\rho_{AE}$ would still be X. For an arbitrary partition $ \mathcal{K}(X) $, applying coarse-graining to the test T, is equivalent to applying the coarse-graining to the entire compositional scenario, that is the scalar, which leads to (\ref{cgquo}), and it can be straightforwardly generalized to arbitrary preparations and arbitrary observations that are not singleton. 

Now, using the induced linear structure in quotiented OPTs, and also the fact that one has the summation rule for the coarse-graining of probabilities (\ref{cgprob}), would result in the standard summation rule for coarse-grained transformations, as it is shown in Ref.~\cite{erba2025categorical}. To illustrate this in an arbitrary quotiented OPT $\Theta$,  consider an arbitrary instrument $\text{T}_{\text{X}}^{\text{A}\rightarrow\text{B}}=[\text{t}_{x}] _{x\in\text{X}}$ for some arbitrary systems A and B $\in\textbf{Sys}(\Theta)$. Then for all $\text{E}\in \textbf{Sys}(\Theta)$, $\vert \rho)_{\text{AE}}\in\textbf{St}(\text{AE})$, $(\text{a}\vert_{\text{BE}}\in \textbf{Eff}(\text{BE})$, and all choices of subsets of outcomes $\text{X}_{k}\subseteq \text{X}$, one has
\begin{align}
(a\vert_{\text{BE}}( \text{t}_{\text{X}_{k}}\boxtimes\mathcal{I}_{\text{E}})\vert
\rho)_{AE}=&\mathlarger{\mathlarger{\mathop{\curlyvee}}}_{x \in \text{X}_{k}} (a\vert_{\text{BE}}\left( \text{t}_{x}\boxtimes\mathcal{I}_{\text{E}}\right)\vert
\rho)_{AE}\nonumber\\
= \sum_{x \in \text{X}_{k}}(a\vert_{\text{BE}}\left( \text{t}_{x}\boxtimes\mathcal{I}_{\text{E}}\right)\vert \rho)_{AE}=&(a\vert_{\text{BE}}(\sum_{x \in \text{X}_{k}} \text{t}_{x}\boxtimes\mathcal{I}_{\text{E}})\vert\rho)_{AE}\;.
\end{align}
Where (\ref{cgquo}) and (\ref{cgprob}), together with the property of linearity, are used, which consequently implies
\begin{equation}
\mathlarger{\mathlarger{\mathop{\curlyvee}}}_{x \in \text{X}_{k}} \text{t}_{x}=\text{t}_{\text{X}_{k}}=\sum_{x \in \text{X}_{k}} \text{t}_{x}.
\end{equation} 
In order to accommodate zero probabilities into the framework, we assume that $0\in\textbf{Transf}(\text{I}\rightarrow\text{I})$. Scalar multiplication (\ref{scmul}) also implies the existence of \textit{null} transformations $\varepsilon_{\text{A}\rightarrow\text{B}}:=0\cdot \text{t}\in\textbf{Transf}(\text{A}\rightarrow\text{B})$ for any $\text{t}\in\textbf{Transf}(\text{A}\rightarrow\text{B})$ and every system A and B $\in\textbf{Sys}(\Theta)$, such that for every $\text{E}\in\textbf{Sys}(\Theta)$, $\vert\rho)_{\text{AE}}\in\textbf{St}(\text{AE})$, $(\text{a}\vert_{\text{BE}}\in\textbf{Eff}(\text{BE})$ one has $(\text{a}\vert_{\text{BE}}\left(\varepsilon_{\text{A}\rightarrow\text{B}}\boxtimes \mathcal{I}_{\text{E}}\right)\vert \rho)_{\text{AE}}=0$.  

A transformation $\text{t}\in\textbf{Transf}(\text{A}\rightarrow\text{B})$ is called \textit{deterministic} if there exists a singleton instrument such that $\text{T}_{\star}=[\text{t}]\in\textbf{Instr}(\text{A}\rightarrow\text{B})$. In other words, deterministic transformations, also known as \textit{channels}, are the transformations happening with certainty, i.e., with the marginal probability 1, and will be denoted by $\textbf{Transf}_{1}(\text{A}\rightarrow\text{B})$. For any deterministic transformation in a quotiented OPT, each element of the corresponding equivalence class in the associated unquotiented OPT is referred to as a \textit{deterministic event}. Deterministic states and deterministic effects are respectively denoted as $\textbf{St}_{1}(\text{A})$ and $\textbf{Eff}_{1}(\text{A})$. For any given instrument $\text{T}_{\text{X}}^{\text{A}\rightarrow\text{B}}$, the full coarse-graining of the instrument is a deterministic transformation i.e., $\sum_{x\in\text{X}}\text{t}_{x}\in\textbf{Transf}_{1}(\text{A}\rightarrow\text{B})$. 

\subsection{Causality and conditioning}\label{subsec:causality}
The assumption that the probability distribution of preparation instruments is independent of the choice of observation instrument at their output is often called \textit{causality}. This is equivalent to the \textit{uniqueness of deterministic effect} for every system~\cite{d2017quantum}. Namely, a quotiented OPT is causal if and only if there exists a unique deterministic effect for every system A, depicted as
\begin{equation}
\tikzfig{Diagramss/50_discard}\;.
\end{equation}
We denote the elements of the corresponding equivalence class of the deterministic effect in the associated unquotiented OPT as
\begin{equation}
\tikzfig{Diagramss/50_discard1}\;.
\end{equation}
where the letter \textit{c} represents an element within this equivalence class. It is well known that this notion of causality is equivalent to the principle of \emph{no-signalling from the future}~\cite{d2017quantum}, and it was also shown that it entails \emph{(spatial) no-signalling}---also known as \emph{no-signalling without interaction}~\cite{d2017quantum,coecke2014terminality}.

In the OPT framework, one can also consider \textit{conditional tests}, where the choice of which test to implement depends on the outcome of an earlier test. 
For example, given a test $\text{A}_{\text{X}}\in \textbf{Test}(\text{A}\to \text{B})$, the occurrence of an outcome $x\in \text{X}$ might dictate the implementation of a test $\text{B}_{\text{Y}^{(x)}}^{(x)}\in \textbf{Test}(\text{B}\rightarrow\text{C})$, which will have an outcome in $\text{Y}^{(x)}$. The space of outcomes for the conditional test will be given by 
\beq \bigsqcup_{x\in \text{X}}\text{Y}^{(x)} = \bigcup_{x\in \text{X}}\bigcup_{y^{(x)}\in \text{Y}^{(x)}}(x,y^{(x)}),
\eeq 
where each outcome $(x,y^{(x)})$ corresponds to the event $b^{(x)}_{y^{(x)}}\circ a_x$ having occurred. That is, the full conditional test is given by $\left[b^{(x)}_{y^{(x)}}\circ a_x\right]_{x\in \text{X}, y^{(x)}\in \text{Y}^{(x)}}$. 

We can, moreover, think about conditioning as a binary operation, denoted $\rhd$, which takes a test $\text{A}_\text{X}$ and an indexed family of tests $[\text{B}^{(x)}_{\text{Y}^{(x)}}]_{x\in \text{X}}$ as an input, such that 
\beq\label{condtest}
\left[\text{B}^{(x)}_{\text{Y}^{(x)}}\right]_{x\in X}\rhd \text{A}_\text{X}:=\left[b^{(x)}_{y^{(x)}}\circ a_x\right]_{x\in \text{X}, y^{(x)}\in \text{Y}^{(x)}}.
\eeq
Note that in general OPTs, this is only a partial operation, as it may be the case that $\left[b^{(x)}_{y^{(x)}}\circ a_x\right]_{x\in X, y^{(x)}\in Y^{(x)}}$ is not a valid test within the OPT.

With slight abuse of notation, we depict conditional tests as
    \begin{equation}
\label{Cond}
\tikzfig{Diagramss/53_condp5}:=\tikzfig{Diagramss/53_condp2}\;.
\end{equation}
An OPT is called \textit{strongly causal}~\cite{Rolino_2025} if $\rhd$ is a total operation, that is, if the conditional test defined as in (\ref{condtest}) is a valid test within the theory for all systems A, B, C $\in \textbf{Sys}(\Theta)$ and for all outcome sets $\text{X}$, all tests $\text{A}_{\text{X}}\in \textbf{Test}(\text{A}\rightarrow\text{B})$, and all $\text{X}$ indexed families of tests $\left[\text{B}^{(x)}_{\text{Y}^{(x)}}\right]_{x\in \text{X}}$ with $\text{B}^{(x)}_{\text{Y}^{(x)}}\in \textbf{Test}(\text{B}\rightarrow\text{C})$ for all $x\in \text{X}$. In other words, every possible conditional test must also be a valid test within the theory. Since we have defined conditional tests and the notion of strong causality for an arbitrary unquotiented OPT, they are also defined for quotiented OPTs as a specific case.

The justification for considering this property as a stronger notion of causality is presented in~\cite{chiribella2010probabilistic}, where it is argued that the possibility to implement any conditional instrument in an arbitrary quotiented OPT implies the uniqueness of the deterministic effect for every system. In other words, causality is a necessary condition to have conditional instruments, and a theory where every conditional instrument is possible is causal. Note that the converse is not true, as shown by Refs.~\cite{PhysRevA.109.022239,Rolino_2025}.

In OPTs where every arbitrary probability can be produced, strong causality also implies that every convex combination of tests is itself a test. Specifically, in any arbitrary strongly causal OPT, given a conditional test, based on an arbitrary probability distribution \( \text{P}_\text{X}=[p_{x} ]_{x \in \text{X}} \), and arbitrary tests \( \text{D}^{(x)}_{\text{Y}^{(x)}} = [ \text{d}_{y}^{(x)} ]_{y \in \text{Y}^{(x)}} \in \textbf{Test}(\text{A} \rightarrow \text{B}) \), the resulting conditional test
\beq
\left[\text{D}^{(x)}_{\text{Y}^{(x)}}\right]_{x\in \text{X}}\rhd (\text{P}_X\boxtimes \textbf{I}^{\text{A}\to \text{A}}_\star) = \left[p_x\cdot d^{(x)}_{y^{(x)}}\right]_{x\in \text{X}, y^{(x)}\in \text{Y}^{(x)}}\;,
\eeq
is itself a test in the theory. If we then fully coarse-grain the $\text{X}$ variable, then this can be interpreted as a convex combination of tests, demonstrating that every convex combination of tests is itself a test.

\subsection{Classical theory}
Classical theory (CT) is a strongly causal quotiented OPT $\Delta$ in which systems are labeled by finite sets $\Lambda$, composite systems are given by the Cartesian product, that is, $\Lambda\Gamma:=\Lambda\times\Gamma$, and the trivial system is given by the singleton set $\star$. 
Transformations from system $\Lambda$ to system $\Gamma$ are given by the substochastic matrices M $\in \textbf{SubStoch}(\Lambda\rightarrow\Gamma)$, and deterministic transformations by stochastic matrices  S $\in \textbf{Stoch}(\Lambda\rightarrow\Gamma)$. As a special case of this, states are (subnormalised) probability distributions, effects are response functions, and scalars are probabilities.

Sequential composition is the composition of (sub)stochastic matrices, and parallel composition is the usual tensor product. 

Finally, instruments in the classical theory are all collections $[\text{t}_{x}] _{x\in\text{X}}\subset \textbf{SubStoch}(\text{A}\rightarrow\text{B})$ such that $\sum_{x\in\text{X}} \text{t}_{x}\in\textbf{Stoch}(\text{A}\rightarrow\text{B})$, and will be denoted as $[ \text{t}_{x}] _{x\in\text{X}}\in\textbf{CInstr}(\text{A}\rightarrow\text{B})$~\cite{erba2025categorical}.
\section{Ontological models of OPTs} 
\subsection{Maps between different OPTs}
In the following subsections, we discuss the notion of ontological models and noncontextual ontological models for OPTs. We will first examine the general concept of a map between two OPTs, and then focus on a specific class of these maps that constitute ontological models. Since the domain and codomain of these maps can be either unquotiented or quotiented OPTs, we will for simplicity use the terms ``tests'' and ``events'' to also refer to instruments and transformations within quotiented OPTs (although this is nonstandard terminology). Given two arbitrary operational probabilistic theories, we can define the following map:
\begin{definition}(Maps between two OPTs). A map $\tikzfig{Diagramss/51_mapopt}$, where the domain and codomain of this map can be either quotiented or unquotiented OPTs, takes every system $\text{A} \in \textbf{Sys}(\Theta)$ to a system $\phi(\text{A}) \in \textbf{Sys}(\Phi)$ and maps every test $\text{T}_{\text{X}}^{\text{A} \rightarrow \text{B}}$ to a $\phi{\left(\text{T}_{\text{X}}^{\text{A} \rightarrow \text{B}}\right)} \in \textbf{Test}(\Phi)$ with outcome space $\phi{\left(\text{X}\,\vert \text{T}_{\text{X}}\right)}$. We depict this as
\begin{equation}
\phi::\tikzfig{Diagramss/36_test}\mapsto \tikzfig{Diagramss/51_mapopt2}\;.
\end{equation}
\end{definition}
If the map $\phi$ preserves the outcome space, such that $\phi\left(\text{X}\,\vert \text{T}_{\text{X}}\right)=\text{X}$, then for any arbitrary test $\text{T}_{\text{X}}^{\text{A}\rightarrow\text{B}}= [\text{t}_{x}]_{x\in X}$ and its corresponding events, we can use the following notation:
\begin{equation}
\scalebox{0.9}{$
\phi{\left(\text{T}_{\text{X}}^{\text{A}\rightarrow\text{B}}\right)}
:=  \left[\phi(\text{t}_{x}\vert \text{T}_{\text{X}}^{\text{A}\rightarrow\text{B}})\right]_{x\in X}\;. 
$}
\end{equation}
In this regard, the event $\phi(t_{x}\vert \text{T}_{\text{X}}^{\text{A}\rightarrow\text{B}})\in \phi{\left(\text{T}_{\text{X}}^{\text{A}\rightarrow\text{B}}\right)}$ can be viewed as the event that the map $\phi$ assigns to the event $\text{t}_{x}\in \text{T}_{\text{X}}^{\text{A}\rightarrow\text{B}}$. For the sake of brevity, we sometimes use the shorthand notation $\phi(\text{T})$ for $\phi{\left(\text{T}_{\text{X}}^{\text{A}\rightarrow\text{B}}\right)}$, and so write
\begin{equation}
\label{map}
\phi(\text{T}):= \left[ \phi(\text{t}_x\vert \text{T})\right]_{x\in X}.
\end{equation}
For a map $\phi$ that preserves the outcome space, we represent how the map transforms events pictorially as
\begin{equation}
\phi::\tikzfig{Diagramss/47_ontmodelunq1}\mapsto \tikzfig{Diagramss/51_mapopt1}\;.
\end{equation}

\subsubsection{Diagram preservation}
The specific class of maps that are relevant in this work are those with the property of \textit{diagram preservation} defined as follows:
\begin{definition}
\label{dpdef}
(Diagram-preserving maps). A diagram-preserving map $\tikzfig{Diagramss/52_dp}$ is one that, for an arbitrary test in $\textbf{Test}(\Theta)$ with a composite input system $\text{A}_{1} \boxtimes \text{A}_{2} \boxtimes \ldots \boxtimes \text{A}_{n}$ and composite output system $\text{B}_{1} \boxtimes \text{B}_{2} \boxtimes \ldots \boxtimes \text{B}_{m}$, assigns a corresponding test in $\textbf{Test}(\Phi)$ with composite input system $\phi_{\text{\tiny DP}}(\text{A}_{1}) \boxtimes \phi_{\text{\tiny DP}}(\text{A}_{2}) \boxtimes \ldots \boxtimes \phi_{\text{\tiny DP}}(\text{A}_{n})$ and composite output system $\phi_{\text{\tiny DP}}(\text{B}_{1}) \boxtimes \phi_{\text{\tiny DP}}(\text{B}_{2}) \boxtimes \ldots \boxtimes \phi_{\text{\tiny DP}}(\text{B}_{m})$
\begin{equation}
\phi_{DP}::\tikzfig{Diagramss/52_dp0}\mapsto\tikzfig{Diagramss/52_dp1}\;,
\end{equation}
Such that composing tests before or after applying the map remains equivalent. For example,
\begin{align}
&\tikzfig{Diagramss/52_dp2}\nonumber \\
&=\tikzfig{Diagramss/52_dp3}\;.
\end{align}
Diagram-preserving maps also preserve the identity 
\begin{equation}
    \tikzfig{Diagramss/52_dp8}= \tikzfig{Diagramss/52_dp9}\;,
\end{equation}
and, similarly,  preserve braidings. 
\end{definition}

As for general maps between two OPTs, when $\phi_{DP}$ satisfies outcome preservation, it is straightforward to define it at the level of events: 
\begin{equation}
\phi_{DP}::\tikzfig{Diagramss/52_dp4}\mapsto\tikzfig{Diagramss/52_dp5}\;,
\end{equation}
Thus, for composite tests, it can likewise be defined for the corresponding composite events, i.e.,
\begin{align} 
&\tikzfig{Diagramss/52_dp6}\nonumber\\
&=\scalebox{0.9}{\tikzfig{Diagramss/52_dp7}}\;.
\end{align}
\subsubsection{Coarse-graining preservation}
For maps with outcome space preservation, we can define the coarse-graining-preserving maps as follows:
\begin{definition}(Coarse-graining-preserving maps). A coarse-graining-preserving map $\tikzfig{Diagramss/58_cgp}$ is one that preserves the outcome space and commutes with the coarse-graining operation $\curlyvee$, that is
\begin{align}
&\tikzfig{Diagramss/58_cgp1}\nonumber\\
=\mathlarger{\mathlarger{\mathop{\curlyvee}}}_{x \in \text{X}_{k}} &\tikzfig{Diagramss/58_cgp2}\; .
\end{align}
\end{definition}
This means the map ensures that for any coarse-grained event, the result of applying the map to the coarse-grained event is the same as the coarse-graining of the map’s action on the individual events.
\subsubsection{Probability preservation}
\begin{definition}\label{ppdef}(Probability-preserving maps). A probability-preserving map $\tikzfig{Diagramss/59_pp}$ is an outcome space preserving map which leaves probability distributions invariant. Specifically, for every probability distribution $P_{X}=[ p_{x}]_{x\in\text{X}}$, one has
\begin{equation}
\tikzfig{Diagramss/59_pp1}=\tikzfig{Diagramss/59_pp4}\;.
\end{equation}
Or, equivalently,
\begin{equation}
\tikzfig{Diagramss/59_pp5}=\tikzfig{Diagramss/59_pp6}\;,
\end{equation}
for every $x \in \text{X}$.
\end{definition}

\subsubsection{Conditioning preservation}
  \begin{definition}
  \label{cpdef}(Conditioning-preserving maps), A conditioning-preserving map $\tikzfig{Diagramss/61_cp}$ is an outcome space preserving map such that for any conditional test
\begin{equation}
    \tikzfig{Diagramss/53_condp1}=\tikzfig{Diagramss/53_condp2}
    \end{equation}  
in $\Theta$, the corresponding test in $\Phi$ is the conditional test given by
  \begin{align}
&\hspace{0.6cm}\tikzfig{Diagramss/61_cp1}\nonumber\\
  =& \tikzfig{Diagramss/61_cp2}\;.
  \end{align}
  \end{definition}
\subsection{Ontological models for quotiented and unquotiented OPTs}
In this section, we examine specific types of maps between OPTs—namely, ontological models.

\begin{definition}\label{ontmodel}
(Ontological models of strongly causal OPTs). An ontological model of a strongly causal OPT is a diagram-preserving map $\tikzfig{Diagramss/47_ontmodelunq}$, where $\xi$ satisfies outcome space preservation, coarse-graining preservation, and conditioning preservation, and is depicted as
\begin{equation}
\xi::\tikzfig{Diagramss/47_ontmodelunq3}\rightarrow \tikzfig{Diagramss/47_ontmodelunq4}\;.
\end{equation}
Due to outcome preservation, $\xi$ can be represented at the level of events
\begin{equation}
\xi::\tikzfig{Diagramss/47_ontmodelunq1}\rightarrow \tikzfig{Diagramss/47_ontmodelunq2}\;.
\end{equation}
\end{definition}
An ontological model of a strongly causal OPT assigns each event to a substochastic process within CT. It also maps every preparation and observation event in an OPT to corresponding states and effects in classical theory, and associates each system A with a system in CT, where $\Lambda_{\text{A}}$ denotes the dimension of the associated classical system, i.e., the dimension of the ontic space \cite{harrigan2010einstein}. For an arbitrary test $\text{T} = \left[ \text{t}_{x}\right]_{x\in X}$ from system $\text{A}$ to system $\text{B}$ in any strongly causal OPT, and given that ontological models preserve the outcome space, the corresponding instrument in an ontological model, from system $\Lambda_{\text{A}}$ to system $\Lambda_{\text{B}}$, similar to (\ref{map}), can also be denoted as
\begin{equation}
\xi{\left( \text{T} \right)} =\left[\xi{\left(\text{t}_{x}\vert \text{T} \right)}\right]_{x\in X}.
\end{equation}
Since ontological models must satisfy outcome space preservation, they inherently preserve determinicity. That is, deterministic events are represented by deterministic transformations. This is because deterministic events correspond to singleton tests, and due to outcome space preservation, the resulting instruments in ontological models remain singleton instruments, and so deterministic. This, in turn, implies the following for the equivalence class of deterministic effects:
 \begin{equation}
    \tikzfig{Diagramss/54_discp}=\tikzfig{Diagramss/54_discp1}\;.
\end{equation}
An essential property that ontological models must satisfy is the preservation of probabilities, as defined in Definition~\ref{ppdef}. We do not include this assumption in Definition \ref{ontmodel} for two reasons. First, this property is not necessary in order to derive our main result. Second, as we prove in Appendices~\ref{appA} and~\ref{app:prob-pres}, probability preservation follows from the other properties of an ontological model for two very important classes of OPTs, namely, those that are \emph{deterministic} (i.e.~the set of probabilities is just $\{0,1\}$), and those where every probability distribution can be produced. Moreover, strongly causal OPTs where every probability distribution is a test are inherently \emph{convex}, in the sense that (as previously noted) convex combinations of tests can be understood as a specific form of conditioning based on probability distributions. Given that conditioning and probabilities are preserved by ontological models, convex combinations of tests, conditioned upon probability distributions, are consequently preserved as well, and hence ontological models preserve convex combinations of tests. The domain of ontological models can consist of either quotiented OPTs or unquotiented OPTs. We denote an ontological model of an OPT by $\xi$, or, when specifically considering the case of a quotiented OPT, we use $\tilde{\xi}$.

In the following sections, we will see that for ontological models of strongly causal quotiented OPTs, one can only consider the action of the map solely on transformations, regardless of the instrument that they belong to, and, as a by-product of this main result, this can be in turn used to demonstrate that ontological models of strongly causal quotiented OPTs where every probability distribution is a test are convex-linear when acting on transformations, which eventually can be extended to the general linear behaviour on transformations as shown in Appendix~\ref{appA}.

\subsection{Noncontextual ontological models}
Not all ontological models for an OPT $\Theta$ provide satisfying classical explanations for $\Theta$. In particular, scientific explanations are also expected to satisfy certain physical principles that have been successful elsewhere in physics. In particular, it has been argued that Leibniz's principle of the identity of indiscernibles~\cite{Leibniz} is an essential principle that should be followed in constructing candidate explanations, and it follows from this that an ontological model must satisfy the principle of generalized noncontextuality to be a satisfying explanation for a given operational theory. 

In brief, noncontextuality means that the representation of a given event in an ontological model should not depend on the \emph{context} of the event \cite{schmid2020unscrambling}. 
In the case of unquotiented OPTs, there are two relevant notions of context on which the representation $\xi(\text{t}|\text{T})$ of a given event, could depend. The first is variability within a given operational equivalence class of events; that is, for the representation to be noncontextual, $\xi(\text{t}|\text{T})$ should depend only on the equivalence class, so $\xi(\text{t}|\text{T})=\xi(\tilde{\text{t}}|\text{T})$. The second is variability of the test to which the event belongs, that is, for the representation to be noncontextual $\xi(\text{t}|\text{T})$ should not depend on $\text{T}$, that is, $\xi(\text{t}|\text{T})=\xi(\text{t})$.  Putting these two together noncontextual representations are those which satisfy $\xi(\text{t}|\text{T})=\xi(\tilde{\text{t}})$.

\begin{definition} \label{NCOMunqot}
(A noncontextual ontological model of a strongly causal unquotiented OPT). An ontological model of the strongly causal unquotiented OPT $\Theta$ $\tikzfig{Diagramss/49_ncontmodel}$ satisfies the principle of generalized noncontextuality if and only if it maps every member of an equivalence class of events to the same substochastic transformation, independent of the specific tests to which the events are associated, that is:
\begin{align}
\tikzfig{Diagramss/49_ncontmodel1}&\sim\tikzfig{Diagramss/49_ncontmodel2} \nonumber\\
&\Downarrow\nonumber \\
\tikzfig{Diagramss/49_ncontmodel3}&=\tikzfig{Diagramss/49_ncontmodel4}\;.
\end{align}
\end{definition}
This definition guarantees that the representation of events is independent of the specific test to which those events belong; namely, for every arbitrary test $\text{T}=\left[\text{t}_{x}\right]_{x\in X} $ in any unquotiented OPT, its representation in a noncontextual ontological model is given by
\begin{align}
\label{ncom}
\xi_{nc}\left(\text{T}\right) &= \left[\xi_{nc}\left(\text{t}_x\vert\text{T} \right)\right]_{x\in X}  \nonumber \\
&=  \left[\xi_{nc}\left( \text{t}_{x}\right)\right]_{x\in X}. 
\end{align}
In the following section it will be notationally convenient to work with fixed outcome sets $\mathbf{n}:=\{1,...,n\}$ and to denote $[\text{t}_i]_{i\in \mathbf{n}}$ as $[\text{t}_1,...,\text{t}_n]$. The results straightforwardly generalise to the case of more general outcome sets albeit at the expense of clarity of presentation. 

If we consider ontological models of quotiented OPTs, then the first kind of contexts (that is, variability within an operational equivalence class) is quotiented out. Hence, trivially, an ontological model of a quotiented OPT cannot depend on this kind of context. However, it is conceivable that ontological models of quotiented OPTs could still depend on the second kind of context, that is, to which instrument a given transformation belongs. That is, for every two arbitrary instruments  $\text{T}=\left[\text{t}^{*},\text{t}_{2},\ldots,\text{t}_{n}\right] $ and $\text{T}^{'}=[\text{t}^{*},\text{t}^{'}_{2},\ldots,\text{t}^{'}_{m}]$ from system A to system B in a given quotiented OPT, where the two have $\text{t}^{*}$  in common, it is conceivable that one could have
\begin{equation}
\label{crq}
\tilde{\xi}{\left(\text{t}^{*}\vert \text{T} \right)}\neq \tilde{\xi}( \text{t}^{*}\vert \text{T}^{'} ) .
\end{equation}
(This is not possible in the GPT framework, which is not equipped with instrument structure.)

In the following section, however, we will demonstrate that this is in fact not a possibility at all. That is, by utilizing some generic features of strongly causal quotiented OPTs, along with the properties we have defined for ontological models, we demonstrate that the instrument structure cannot be leveraged to allow for any contextual ontological representation of a strongly causal quotiented OPT.  Consequently, similar to in the GPT framework, there is a one-to-one correspondence between noncontextual ontological models of an unquotiented OPT and ontological models of the corresponding quotiented OPT.
\subsection{Impossibility of contextual ontological models of strongly causal quotiented OPTs }
To disprove the possibility of having contextual representation for ontological models of strongly causal quotiented OPTs, first we need to define a particular kind of coarse-graining. Given an arbitrary instrument  $\text{T}=\left[\text{t}_{1},\text{t}_{2},\ldots,\text{t}_{m}\right]$ from system A to system B in any strongly causal quotiented OPT, we define the coarse-graining $\mathfrak{C}_{\mathfrak{B}}{\left( \text{T}\right)}$ for instruments with outcome space cardinality greater than two as the one that results in the binary instrument
\begin{equation}
\label{BCG}
\mathfrak{C}_{\mathfrak{B}}{\left( \text{T}\right)}:=\left[ \text{t}_{1},\text{t}_{2}+\ldots +\text{t}_{m}\right].
\end{equation}
Note that any ontological model must commute with coarse-graining, which in this case means that $\tilde{\xi}\big(\mathfrak{C}_{\mathfrak{B}}(\text{T})\big)=\mathfrak{C}_{\mathfrak{B}}{\big(\tilde{\xi}(\text{T})\big)}$, so that
\begin{align}
\label{CG}
\left[\tilde{\xi}\big(\text{t}_{1}\vert\,\mathfrak{C}_{\mathfrak{B}}\left( \text{T}\right)\big),\tilde{\xi}\big(\text{t}_{2}+\ldots +\text{t}_{m}\vert\,\mathfrak{C}_{\mathfrak{B}}\left( \text{T}\right)\big)\right]\nonumber \\
= \left[\tilde{\xi}{\left(\text{t}_{1}\vert \text{T}\right)},\tilde{\xi}\left(\text{t}_{2}\vert \text{T}\right)+...+\tilde{\xi}\left(\text{t}_{m}\vert \text{T}\right)\right].
\end{align}
Having defined the above coarse-graining, consider two arbitrary instruments $\text{T}=[\text{t}^{*},\text{t}_{1},\ldots,\text{t}_{n}] $ and $\text{T}^{'}=[\text{t}^{*},\text{t}^{'}_{1},\ldots,\text{t}^{'}_{m}]$, both from system $A$ to system $B$, and with one transformation $\text{t}^{*}$ in common. Applying $\mathfrak{C}_{\mathfrak{B}}$ to these instruments, we have 
\begin{equation}
\mathfrak{C}_{\mathfrak{B}}\left( \text{T}\right)=[\text{t}^{*},\text{w}:=\sum_{i=1}^{n} \text{t}_{i}] , \; \mathfrak{C}_{\mathfrak{B}}( \text{T}^{'})=[\text{t}^{*},\text{w}^{'}:=\sum_{j=1}^{m} \text{t}^{'}_{j}] .
\end{equation}
For their ontological representations, using (\ref{CG}), we have
\begin{align}
\tilde{\xi}\big(\mathfrak{C}_{\mathfrak{B}}( \text{T})\big)&=\left[ \tilde{\xi}\big(\text{t}^{*}\vert \text{T}\big),\tilde{\xi}(\text{w}\,\vert\,\mathfrak{C}_{\mathfrak{B}}( \text{T})\big) \right] \\
\tilde{\xi}\big(\mathfrak{C}_{\mathfrak{B}}( \text{T}^{'})\big)&=\left[ \tilde{\xi} \big(\text{t}^{*}\,\vert\, \text{T}^{'}\,\big),\tilde{\xi}\big(\text{w}^{'}\vert\,\mathfrak{C}_{\mathfrak{B}}( \text{T}^{'})\big)\right].
\end{align}
Now, consider the transformations
\begin{equation}
\tikzfig{Diagramss/34}\;,\; \tikzfig{Diagramss/35}\;,
\end{equation}
namely the identity transformation  $\mathcal{I}_{\text{B}}$ on system B, and a second transformation, $\mathcal{D}_{\text{S}}$, that involves discarding system B and preparing it in an arbitrary deterministic state S of the theory. Since these transformations are deterministic, each can also be regarded as a singleton instrument, and we also use $\mathcal{I}_{\text{B}}$ and $\mathcal{D}_{\text{S}}$ to denote the singleton instruments. By implementing the instruments $\mathfrak{C}_{\mathfrak{B}}(\text{T})$ and $\mathfrak{C}_{\mathfrak{B}}(\text{T}^{'})$, and then applying $\mathcal{I}_{\text{B}}$ and $\mathcal{D}_{\text{S}}$, appropriately conditioned on their outcomes, we can define the following conditional pair of instruments from system A to system B:
\begin{equation}\label{eq:conditional-instruments}
\text{C}_{\text{T}}:=[\mathcal{I}_{\text{B}}\circ \text{t}^{*},\mathcal{D}_{\text{S}}\circ \text{w}],\;\text{C}_{\text{T}^{'}}:=[\mathcal{I}_{\text{B}}\circ \text{t}^{*},\mathcal{D}_{\text{S}}\circ \text{w}^{'}].
\end{equation}
Diagrammatically, these are
\begin{equation}\label{eq:conditioned-test-1}
\tikzfig{Diagramss/57_y}=[\tikzfig{Diagramss/57_y1}, \tikzfig{Diagramss/57_y2}]\;,
\end{equation}
\begin{equation}
\tikzfig{Diagramss/57_y3}=[ \tikzfig{Diagramss/57_y1}, \tikzfig{Diagramss/57_y4}]\;.
\end{equation}
By strong causality, both $\text{C}_{\text{T}}$ and $\text{C}_{\text{T}^{'}}$ are instruments within the theory. Moreover, by virtue of causality and the fact that the full coarse-graining of T and $\text{T}'$ are deterministic transformations, one has
\begin{equation}
\label{fullcg}
\mathcal{D}_{\text{S}}\circ \left( \text{t}^{*}+\text{w}\right)=\mathcal{D}_{\text{S}}=\mathcal{D}_{\text{S}}\circ ( \text{t}^{*}+\text{w}^{'})\Rightarrow\mathcal{D}_{\text{S}}\circ \text{w}=\mathcal{D}_{\text{S}}\circ \text{w}^{'},
\end{equation}
which implies that $\text{C}_{\text{T}}=\text{C}_{\text{T}^{'}}$. Now consider an ontological representation of $\text{C}_{\text{T}}$, which is an instrument from system $\Lambda_{\text{A}}$ to system $\Lambda_{\text{B}}$:
\begin{equation}
\label{CondOnt}
\tilde{\xi}{\left( \text{C}_{\text{T}}\right)}=\left[\tilde{\xi}{\left( \mathcal{I}_{\text{B}}\circ \text{t}^{*}\vert \text{C}_{\text{T}}\right)},\tilde{\xi}{\left(\mathcal{D}_{\text{S}}\circ \text{w}\vert \text{C}_{\text{T}}\right)}\right].
\end{equation}
Using conditioning preservation of ontological models, we have
\begin{equation}
\tilde{\xi}{\left( \text{C}_{\text{T}}\right)}=\left[ \tilde{\xi}{\left(\mathcal{I}_{\text{B}}\right)}\circ\tilde{\xi}{\left( \text{t}^{*}\vert \text{T}\right)},\tilde{\xi}{\left(\mathcal{D}_{\text{S}}\right)}\circ\tilde{\xi}\big( \text{w}\vert\,\mathfrak{C}_{\mathfrak{B}}{(\text{T})}\big)\right] ,
\end{equation} 
and using identity preservation, we further have
\begin{equation}\label{eq:final-instrument}
\tilde{\xi}{\left( \text{C}_{\text{T}}\right)}=\left[\tilde{\xi}{\left( \text{t}^{*}\vert \text{T}\right)},\tilde{\xi}{\left(\mathcal{D}_{\text{S}}\right)}\circ\tilde{\xi}\big( \text{w}\vert\,\mathfrak{C}_{\mathfrak{B}}(\text{T})\big)\right] .
\end{equation} 
It may help to repeat this logic diagrammatically. For the first transformation of (\ref{CondOnt}), one has
\begin{align}
\label{CondOnt1}
\tilde{\xi}{\left( \mathcal{I}_{\text{B}}\circ \text{t}^{*}\vert \text{C}_{\text{T}}\right)}=&\tikzfig{Diagramss/57_onty}\nonumber\\
=&\tikzfig{Diagramss/57_onty1}\nonumber\\
=&\tikzfig{Diagramss/57_onty3}\nonumber\\
=&\tilde{\xi}\big( \text{t}^{*}\vert\,\mathfrak{C}_{\mathfrak{B}}(\text{T})\big)=\tilde{\xi}\left( \text{t}^{*}\vert \text{T}\right);
\end{align}
for the second transformation, one has
\begin{align}
\tilde{\xi}\left(\mathcal{D}_{\text{S}}\circ \text{w}\vert \text{C}_{\text{T}}\right)=&\tikzfig{Diagramss/57_onty4}\nonumber\\
=&\tikzfig{Diagramss/57_onty5}\nonumber\\
=&\tilde{\xi}\left(\mathcal{D}_{\text{S}}\right)\circ\tilde{\xi}\big( \text{w}\vert\, \mathfrak{C}_{\mathfrak{B}}(\text{T}) \big).
\end{align}
Similarly, for $\text{C}_{\text{T}^{'}}$, we have
\begin{equation}
\tilde{\xi}( \text{C}_{\text{T}^{'}})=\left[ \tilde{\xi}( \text{t}^{*}\vert \text{T}^{'}),\tilde{\xi}\left(\mathcal{D}_{\text{S}}\right)\circ\tilde{\xi} \big( \text{w}^{'}\vert\, \mathfrak{C}_{\mathfrak{B}}(\text{T}^{'}) \big)\right] .
\end{equation}
But as $\text{C}_{\text{T}}=\text{C}_{\text{T}^{'}}$, we must have $\tilde{\xi}(\text{C}_{\text{T}})=\tilde{\xi}(\text{C}_{\text{T}^{'}})$ , which then immediately implies
\begin{equation}
\tilde{\xi}\left( \text{t}^{*}\vert \text{T}\right)=\tilde{\xi}( \text{t}^{*}\vert \text{T}^{'})
.
\end{equation}
This contradicts~\eqref{crq}, and so proves there is no possibility of a contextual ontological model of a quotiented OPT. It also shows that ontological representations of a quotiented OPT is determined entirely by its representation of the transformations, which naturally extend to give a unique representation of the instruments. Thus, one has the following:
\begin{lemma}
\label{Lemma}
For every instrument $\text{T}=\left[\text{t}_{1},\text{t}_{2},\ldots,\text{t}_{m}\right]$ in a strongly causal quotiented OPT, the corresponding instrument in an ontological model, from system $\Lambda_{\text{A}}$ to $\Lambda_{\text{B}}$, is represented as
\begin{align}
\tilde{\xi}\left( \text{T}\, \right) &=\left[\tilde{\xi}\left(\text{t}_{1}\vert \text{T} \,\right),\tilde{\xi}\left(\text{t}_{2}\vert \text{T} \,\right),\ldots,\tilde{\xi}\left(\text{t}_{m}\vert \text{T} \,\right)\right] \nonumber \\
&=\left[\tilde{\xi}(\text{t}_{1}),\tilde{\xi}(\text{t}_{2}),\ldots,\tilde{\xi}(\text{t}_{m})\right] .
\end{align}
\end{lemma}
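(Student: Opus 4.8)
The plan is to obtain the Lemma as a direct packaging of the context-independence established in the discussion immediately above. That discussion shows that whenever two instruments $\text{T}$ and $\text{T}'$ from $\text{A}$ to $\text{B}$ share a transformation $\text{t}^{*}$ one has $\tilde{\xi}(\text{t}^{*}\vert\text{T})=\tilde{\xi}(\text{t}^{*}\vert\text{T}')$; it does so by coarse-graining all the other outcomes into a single transformation $\text{w}$ via $\mathfrak{C}_{\mathfrak{B}}$ (Eq.~\eqref{BCG}), building the conditional instruments $\text{C}_{\text{T}}=[\mathcal{I}_{\text{B}}\circ\text{t}^{*},\mathcal{D}_{\text{S}}\circ\text{w}]$ and $\text{C}_{\text{T}'}$, noting these are legitimate instruments by strong causality, proving $\text{C}_{\text{T}}=\text{C}_{\text{T}'}$ from causality and Eq.~\eqref{fullcg}, and reading off the conclusion from $\tilde{\xi}(\text{C}_{\text{T}})=\tilde{\xi}(\text{C}_{\text{T}'})$ using conditioning- and identity-preservation of $\tilde{\xi}$. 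The substantive work is therefore already done; what remains is to turn this into a statement about every outcome of every instrument and then to restate it in the form of the Lemma.

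The steps I would carry out are as follows. First, observe that the choice of $\text{t}^{*}$ as a distinguished outcome is without loss of generality: since $\curlyvee$ is associative and commutative and, as remarked after Eq.~\eqref{cgevent}, a coarse-grained event depends only on the set of events merged and not on their labelling, one may single out any outcome $\text{t}_{i}$ of any instrument and run the construction of $\mathfrak{C}_{\mathfrak{B}}$ and $\text{C}_{\text{T}}$ accordingly; for a binary instrument $[\text{t}_{1},\text{t}_{2}]$ one simply takes $\text{C}_{\text{T}}=[\mathcal{I}_{\text{B}}\circ\text{t}_{1},\mathcal{D}_{\text{S}}\circ\text{t}_{2}]$, bypassing $\mathfrak{C}_{\mathfrak{B}}$, and a singleton instrument is deterministic so the statement is immediate there. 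This yields: for every transformation $\text{t}$ and every pair of instruments both containing $\text{t}$, the value $\tilde{\xi}(\text{t}\vert\cdot)$ is the same. Second, since every transformation is a class of events and every event belongs to some test, every transformation belongs to some instrument, so this common value defines an assignment $\text{t}\mapsto\tilde{\xi}(\text{t})$ with no reference to any instrument. Third, fix $\text{T}=[\text{t}_{1},\ldots,\text{t}_{m}]$; by outcome-space preservation $\tilde{\xi}$ acts at the level of events and, in the notation of Eq.~\eqref{map}, $\tilde{\xi}(\text{T})=[\tilde{\xi}(\text{t}_{1}\vert\text{T}),\ldots,\tilde{\xi}(\text{t}_{m}\vert\text{T})]$; substituting $\tilde{\xi}(\text{t}_{i}\vert\text{T})=\tilde{\xi}(\text{t}_{i})$ for each $i$ gives the second line of the Lemma, and $\tilde{\xi}(\text{T})$ is automatically a classical instrument because $\tilde{\xi}$ maps tests to tests.

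I expect the main obstacle to be the first step rather than anything conceptually new: one must verify that re-running the conditional-instrument construction for an arbitrary distinguished outcome, and in the cardinality-two case, still produces valid instruments — this is precisely where strong causality enters — and that the equality $\text{C}_{\text{T}}=\text{C}_{\text{T}'}$ persists. The latter is robust: Eq.~\eqref{fullcg} only uses that the full coarse-grainings of $\text{T}$ and $\text{T}'$ coincide as one and the same deterministic transformation and that composing with $\mathcal{D}_{\text{S}}$ then erases the difference via causality, neither of which depends on which outcome was singled out. Once this bookkeeping, together with the essentially trivial singleton case, is dispatched, the Lemma follows with no further ideas.
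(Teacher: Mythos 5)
Your proposal is correct and follows essentially the same route as the paper: the Lemma is indeed just the packaging of the preceding argument (binary coarse-graining $\mathfrak{C}_{\mathfrak{B}}$, the conditional instruments $\text{C}_{\text{T}}$ and $\text{C}_{\text{T}'}$ built from $\mathcal{I}_{\text{B}}$ and $\mathcal{D}_{\text{S}}$, their equality via causality, and conditioning- plus identity-preservation of $\tilde{\xi}$), and your bookkeeping about the distinguished outcome and the binary case is the same implicit generality the paper relies on. The only soft spot is calling the singleton case ``immediate''---comparing $\tilde{\xi}(\text{t}^{*}\vert[\text{t}^{*}])$ with $\tilde{\xi}(\text{t}^{*}\vert\text{T}')$ for a non-singleton $\text{T}'$ containing a deterministic $\text{t}^{*}$ takes a short extra argument via null transformations---but the paper glosses over this degenerate case as well.
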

Thus, despite having defined ontological models of quotiented OPTs based on instrument structure and exploring the potential to use this structure for contextual representations of quotiented OPTs, it turns out that due to some very generic properties of quotiented OPTs (particularly strong causality), one can demonstrate that it is sufficient to consider ontological models solely at the level of transformations, independent of the instruments they are associated with. Note that for ontological models of unquotiented OPTs, the same line of reasoning (with a minor modification for coarse-graining and replacing the sum with $\curlyvee$) cannot be applied. This is because, in unquotiented theories, (\ref{fullcg}) does not generally hold---the two sides of the equation are merely {\em operationally equivalent} rather than equal. Consequently, $\text{C}_\text{T}$ and $\text{C}_{\text{T}^{'}}$ are also operationally equivalent, but not equal, allowing for the possibility of an ontological model acting differently on them. 

However, ontological models of unquotiented theories that are noncontextual (Definition~\ref{NCOMunqot}) must satisfy the property in Lemma~\ref{Lemma}. Thus, one can follow a similar line of reasoning as in the proof of Proposition 3.2 from Ref.~\cite{Schmid2024structuretheorem} to obtain the following.
\begin{theorem}
There is a one-to-one correspondence between noncontextual ontological models of a strongly causal unquotiented OPT $\tikzfig{Diagramss/49_ncontmodel}$ and ontological models of the associated strongly causal quotiented OPT $\tikzfig{Diagramss/48_ontmodelq}$.
\end{theorem}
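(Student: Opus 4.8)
The plan is to construct the bijection explicitly through the canonical quotient map, using Lemma~\ref{Lemma} to dispense with the one feature that distinguishes OPTs from GPTs --- the instrument structure --- so that the argument otherwise parallels the proof of Proposition~3.2 of Ref.~\cite{Schmid2024structuretheorem}. Write $q\colon\Theta\to\tilde\Theta$ for the quotient map: it fixes every system, sends each event $\text{t}$ to its equivalence class $\tilde{\text{t}}$, and sends each test $[\text{t}_x]_{x\in\text{X}}$ to the instrument $[\tilde{\text{t}}_x]_{x\in\text{X}}$. As recalled in the excerpt, this equivalence relation is a congruence, so $q$ preserves sequential and parallel composition, identities, braidings, outcome spaces, and coarse-graining, and hence also conditioning (conditional tests being built from sequential composition via $\rhd$, and $\tilde\Theta$ being strongly causal); moreover $q$ is surjective on systems and on tests.

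For the forward map, given a noncontextual ontological model $\xi_{nc}$ of $\Theta$ (Definition~\ref{NCOMunqot}), I would define $\tilde\xi$ on $\tilde\Theta$ by $\tilde\xi\big(\tilde{\text{t}}\mid\tilde{\text{T}}\big):=\xi_{nc}(\text{t}\mid\text{T})$ for arbitrary representatives. This is well defined exactly because noncontextuality forces $\xi_{nc}(\text{t}\mid\text{T})$ to depend only on the class $\tilde{\text{t}}$: indeed~(\ref{ncom}) shows $\xi_{nc}(\text{T})$ is a function of the tuple of equivalence classes alone, so operationally equivalent tests have the same image. By construction $\tilde\xi\circ q=\xi_{nc}$, and one then checks that $\tilde\xi$ satisfies all the clauses of Definition~\ref{ontmodel}: to verify any of diagram, outcome-space, coarse-graining, or conditioning preservation for $\tilde\xi$, lift the relevant configuration in $\tilde\Theta$ to a representative in $\Theta$ (possible since $q$ is surjective on tests), invoke the corresponding property of $\xi_{nc}$, and transport the conclusion back down using that $q$ preserves that structure.

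For the backward map, given an ontological model $\tilde\xi$ of $\tilde\Theta$, I would set $\xi_{nc}:=\tilde\xi\circ q$. Being the composite of a congruence with an ontological model, this is diagram-, outcome-space-, coarse-graining-, and conditioning-preserving, hence an ontological model of $\Theta$; and it is noncontextual, since $\xi_{nc}(\text{t}\mid\text{T})=\tilde\xi\big(\tilde{\text{t}}\mid\tilde{\text{T}}\big)=\tilde\xi(\tilde{\text{t}})$ by Lemma~\ref{Lemma}, so it depends only on $\tilde{\text{t}}$, which is precisely the condition in Definition~\ref{NCOMunqot}. The two assignments are mutually inverse: from $\xi_{nc}$ the forward map yields $\tilde\xi$ with $\tilde\xi\circ q=\xi_{nc}$, and applying the backward map returns $\xi_{nc}$; from $\tilde\xi$ the backward map yields $\xi_{nc}=\tilde\xi\circ q$, and the forward map then yields some $\tilde\xi'$ with $\tilde\xi'\circ q=\tilde\xi\circ q$, whence $\tilde\xi'=\tilde\xi$ because $q$ is surjective on systems and tests and both maps preserve outcome spaces.

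I expect the main obstacle to be bookkeeping rather than conceptual: confirming that \emph{every} defining clause of an ontological model descends along the forward construction. Diagram preservation and outcome-space preservation are immediate, but coarse-graining and especially conditioning preservation require that an arbitrary coarse-graining or conditional instrument in $\tilde\Theta$ admits a representative of the same form in $\Theta$ --- which rests on $q$ being surjective on tests and on $\rhd$ commuting with $q$, and uses that strong causality of $\Theta$ is inherited by $\tilde\Theta$. Once these compatibilities are made explicit (they are essentially the congruence facts already cited for $\tilde\Theta$ being a valid OPT), the verification is routine, and the only genuinely new input is Lemma~\ref{Lemma}, needed solely to rule out residual ``which-instrument'' dependence on the quotiented side.
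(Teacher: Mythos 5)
Your construction is correct and is essentially the paper's intended argument: the paper proves this theorem only by pointing to Proposition~3.2 of Ref.~\cite{Schmid2024structuretheorem} together with the observation that Lemma~\ref{Lemma} supplies the missing ingredient (instrument-independence on the quotiented side), and your explicit bijection via the quotient map $q$ --- forward by well-definedness from noncontextuality, backward by $\xi_{nc}:=\tilde\xi\circ q$ with Lemma~\ref{Lemma} ensuring noncontextuality --- is exactly that line of reasoning spelled out. The structural verifications you flag (surjectivity of $q$ on tests, compatibility of $q$ with $\rhd$ and coarse-graining) are indeed the congruence facts the paper already invokes, so nothing is missing.
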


In particular, an ontological model of an unquotiented OPT is noncontextual if and only if it factors through the quotienting map. Note that if our result, Lemma~\ref{Lemma}, did {\em not} hold, then this would not be the case (and the above theorem would not hold), since then it would be possible that despite a given map (from the unquotiented OPT to the strictly classical OPT) factoring through the quotienting map, the ontological representation could still depend on the instrument structure (as this survives the quotienting map).

Before drawing our conclusions, we make here a few comments about the necessity of some of our assumptions. The proof of our main result does not technically rely on the preservation of parallel composition by the ontological model. However, we required that the ontological model is a strict monoidal functor---this being the hypothesis of \emph{diagram preservation} in Definition~\ref{ontmodel}---which includes the preservation of parallel composition:\footnote{Technically, our result would still hold under the weaker requirement that the ontological model is a \emph{strong} monoidal functor (we neglected this possibility here for the sake of simplicity).} this is because strict monoidality is a defining property of ontological models~\cite{Schmid2024structuretheorem}, and we therefore endorse it as reasonable in any case. On the other hand, note that diagram preservation is however necessary for proving linearity as done in Appendix~\ref{appA} (although linearity is not used for proving our main result).

Furthermore, preservation of the identity and of conditioning are manifestly important for the derivation of our main result, in particular to deduce Eq.~\eqref{eq:final-instrument} from Eq.~\eqref{CondOnt}. Moreover, there is evidence that, at least in some cases, full strong causality may be necessary for the existence of tests of the form~\eqref{eq:conditional-instruments} \emph{for every} transformation $\text{t}^*$ in an OPT. For instance, one can see that a broad class of \emph{minimal OPTs}~\cite{PhysRevA.109.022239,Rolino_2025} (violating strong causality)---such as minimal classical theory (MCT) studied in Ref.~\cite{PhysRevA.109.022239}, or minimal quantum theory (MQT) studied in Ref.~\cite{Rolino_2025}---lacks, for many transformations $\text{t}^{*}$, instruments of the form~\eqref{eq:conditional-instruments}. This strengthens the status of strong causality in the derivation of our main result.\footnote{For another link between strong causality, the geometry of transformations, and the test structure, see also~\cite[Thm.~2, App.~C]{d2020classicality}.}

\section{Discussion}
Our main result shows that for strongly causal quotiented OPTs, any ontological model cannot depend on the ``which instrument'' context. This means that once one has a specification of what the ontological model does to transformations, one can immediately say what happens to the instruments in the theory. This raises the question, why not forget about the structure of instruments entirely, and just define ontological models at the level of transformations?

This is essentially what was done in Ref.~\cite{Schmid2024structuretheorem}. The GPTs defined therein can be thought of as a strongly causal quotiented OPT where one does not have the instrument structure, and ontological models are defined therein as maps acting on the level of transformations. 

It is instructive to consider how these two frameworks match up with one another. On the one hand, we can turn an ontological model for a strongly causal quotiented OPT (where it is formally defined by its action on tests/instruments) into one for the GPT (where it is formally defined by its action on transformations). And, on the other hand, we can turn an ontological model for the GPT into one for the full strongly causal quotiented OPT. 
Importantly, one can show that going in either direction does indeed result in a valid ontological model. That is, the assumptions imposed in Ref.~\cite{Schmid2024structuretheorem} are in fact equivalent to those we demand here.
Hence, we have a one-to-one correspondence between noncontextual ontological models as defined in Ref.~\cite{Schmid2024structuretheorem} and those that we define here. At least for the study of noncontextual ontological models, the choice of which framework to work with is therefore simply a matter of personal preference.  

The most interesting direction for future work is to construct noncontextual ontological models for existing OPTs (or to prove that this is not possible). In forthcoming work, we begin this process by showing that one can construct a noncontextual ontological model for the theory known as bilocal classical theory \cite{d2020classicality}. Moreover, Theorem~2 in Ref.~\cite[App.~C]{d2020classicality} provides a procedure to construct a strongly causal completion for every quotiented causal OPT---namely, one with a unique deterministic effect (see Subsection~\ref{subsec:causality}). In other words, \emph{every quotiented causal OPT admits of a strongly causal (OPT-)explanation}. Therefore, a natural open problem arises: is it possible to combine our main result with the existence of strongly causal completions in order to conclude that contextual ontological models of quotiented causal OPTs are also impossible (thus extending our main result)? More generally, this work opens the door for relating topics studied in the framework of OPTs to generalized noncontextuality.

\

\section*{Acknowledgements}
We thank Rob Spekkens for helpful discussions on the possibility of contextual representations of quotiented theories. SS, ME, DS, and JHS were supported by the National Science Centre, Poland (Opus project, Categorical Foundations of
the Non-Classicality of Nature, project no. 2021/41/B/ST2/03149).
JHS conducted part of this research while visiting the
Okinawa Institute of Science and Technology (OIST) through the Theoretical Sciences Visiting
Program (TSVP). DS was supported by Perimeter Institute for Theoretical Physics. Research at Perimeter Institute is supported in part by the Government of Canada through the Department of Innovation, Science and Economic Development and by the Province of Ontario through the Ministry of Colleges and Universities. All of the diagrams within this manuscript were prepared using \href{https://tikzit.github.io/}{TikZit}.
\bibliography{context}
\appendix
\section{Empirical adequacy and linearity of ontological models}
\label{appA}

Since we have demonstrated in Lemma \ref{Lemma} that for strongly causal quotiented OPTs, the action of ontological models on transformations can be considered independently of the instrument to which they belong, it follows that for an arbitrary probability distribution $\text{P}_{\text{X}}=[p_{x}]_{x\in\text{X}}$, we have $\tilde{\xi}\left( p_{x}\vert\text{P}_{\text{X}}\right)=\tilde{\xi}\left(p_{x}\right)$. Then, every ontological model of a strongly causal quotiented OPT $\Theta$ satisfies the following properties:  
\begin{itemize}
    \item For every $p \in [0,1]$, it holds that $\tilde{\xi}(p) \in [0,1]$.
    \item The model preserves parallel compositions of probabilities, i.e., $\tilde{\xi}(pq) = \tilde{\xi}(p)\tilde{\xi}(q)$.
    \item The model preserves sums of probabilities, i.e., $\tilde{\xi}(p+q) = \tilde{\xi}(p) + \tilde{\xi}(q)$.
    \item Deterministic probabilities are mapped to deterministic probabilities, i.e., $\tilde{\xi}(1) = 1$.
\end{itemize}
The first property comes from the fact that an ontological model maps every probability distribution to a probability distribution. The second property follows from the property of diagram preservation, the third one from the preservation of coarse-graining, and the last one from the preservation of determinicity. Using the above properties, it is easy to prove that, at least if every probability distribution is a test of the OPT, then $\tilde{\xi}$ is linear on probabilities, as we now show. First---as it is proven in Appendix~\ref{app:prob-pres} just by using preservation of coarse-graining, of parallel composition, and of determinicity---one also has:
\begin{equation}\label{eq:emp-ad}
\tilde{\xi}(p)=p .
\end{equation}
That is, $\tilde{\xi}$ preserves probabilities. Property~\eqref{eq:emp-ad} has been termed \emph{empirical adequacy} in Ref.~\cite{Schmid2024structuretheorem}. This result can likewise be established for ontological models of strongly causal unquotiented OPTs. It is important to note, as mentioned earlier, that the proof outlined for Lemma~\ref{Lemma} is not applicable to ontological models of strongly causal unquotiented OPTs. Nevertheless, it remains valid for probability distributions. To see this, consider two probability distributions $\text{P}=[ p^{*},p_{1},p_{2},\ldots,p_{n}]$ and $\text{P}^{'}=[ p^{*},p^{'}_{1},p^{'}_{2},\ldots,p^{'}_{m}]$ in an strongly causal unquotiented OPT. Applying $\mathfrak{C}_{\mathfrak{B}}$ as defined in (\ref{BCG}), results in $\mathfrak{C}_{\mathfrak{B}}(\text{P})=[ p^{*},\sum_{i=1}^{n}p_{i}]$ and $\mathfrak{C}_{\mathfrak{B}}(\text{P}^{'})=[p^{*},\sum_{j=1}^{m}p^{'}_{j}]$ that are equal, and since any ontological model $\xi$ commutes with coarse-graining, it follows that
\begin{equation}
\xi(p^{*}\vert \text{P})=\xi(p^{*}\vert \text{P}^{'}).
\end{equation}  
Hence, test dependency can be dropped for probability distributions, namely, for every probability distribution  $\text{P}_{\text{X}}=[ p_{x}]_{x\in\text{X}}$ in any strongly causal unquotiented OPT, and for every ontological model $\xi$, it holds that $\xi{\left( p_{x}\vert\text{P}_{\text{X}}\right)}=\xi{\left(p_{x}\right)}$. Consequently, all the properties listed also apply to ontological models of unquotiented OPTs, ensuring the preservation of probability.
\begin{equation}
\xi(p)=p .
\end{equation}
For strongly causal quotiented OPTs where every probability distribution is a test, it can be demonstrated that $\tilde{\xi}$ is convex-linear with respect to transformations. Specifically, in any such OPT $\Theta$, for every probability distribution $[ p_{x} ]_{x \in \text{X}}$, every system A and B $\in \textbf{Sys}$, and for every transformation $\text{t}_{x} \in \text{Transf}(\Theta) \in \textbf{Transf}(\text{A} \rightarrow \text{B})$, one has
\begin{equation}
\label{convlin}
\tilde{\xi}\left(\sum_{x}p_{x}\text{t}_{x}\right)=\sum_{x} p_{x}\tilde{\xi}\left(\text{t}_{x}\right).
\end{equation}
This would indeed follow by applying Lemma~\ref{Lemma}, along with the preservation of conditioning, probabilities, and coarse-graining of $\tilde{\xi}$. To see this, consider two arbitrary instruments $\text{T}=[\text{t}_{1},\text{t}_{2}]$ and $\text{T}^{'}=[\text{t}^{'}_{1},\text{t}^{'}_{2}]$, from system A to system B, and a probability distribution $\text{P}=[ p, 1-p]$. Conditioning based on the probability distribution P, we can then define the conditional instrument
\begin{equation}
\text{C}_{\text{P}}:=\left[ p \text{t}_{1}, p \text{t}_{2}, (1-p)\text{t}^{'}_{1}, (1-p)\text{t}^{'}_{2}\right] .
\end{equation}
In other words T is implemented with the probability $p\in \text{P}$, and $\text{T}^{'}$ with the probability $1-p\in \text{P}$. Applying an ontological model $\tilde{\xi}$ on the above conditional instrument, and using conditioning preservation, probability preservation, and the fact the one can drop instrument dependency from $\tilde{\xi}$ in the light of Lemma~\ref{Lemma}, one has
\begin{equation}
\tilde{\xi}(\text{C}_{\text{P}})=\left[p\tilde{\xi}(\text{t}_{1}), p \tilde{\xi}(\text{t}_{2}), (1-p)\tilde{\xi}(\text{t}^{'}_{1}), (1-p)\tilde{\xi}(\text{t}^{'}_{2})\right].
\end{equation}
Now, given the outcome space $\text{X}:=\lbrace 1,2,3,4 \rbrace $ in the instrument $\text{C}_{\text{P}}$, consider the partition $\mathcal{K}(\text{X})=\lbrace\lbrace 1,3 \rbrace,\lbrace 2\rbrace, \lbrace 4\rbrace\rbrace$. Applying coarse-graining map based on this partition results in
\begin{equation}
\scalebox{0.9}{$
\mathfrak{C}_{\mathcal{K}(\text{X})}{\left(\tilde{\xi}(\text{C}_{\text{P}})\right)}=\left[p\tilde{\xi}(\text{t}_{1})+(1-p)\tilde{\xi}(\text{t}^{'}_{1}), p \tilde{\xi}(\text{t}_{2}), (1-p)\tilde{\xi}(\text{t}^{'}_{2})\right] .$}
\end{equation}
However, note that every ontological model, by definition, preserves coarse-graining, that is
\begin{equation*}
\begin{split}
\mathfrak{C}_{\mathcal{K}(\text{X})}&{\left(\tilde{\xi}(\text{C}_{\text{P}})\right)}=\tilde{\xi}\Big(\mathfrak{C}_{\mathcal{K}(\text{X})}(\text{C}_{\text{P}})\Big)\\
&=\left[\tilde{\xi}( p\text{t}_{1}+(1-p)\text{t}^{'}_{1}),  \tilde{\xi}(p\text{t}_{2}), \tilde{\xi}\big((1-p)\text{t}^{'}_{2}\big)\right] .
\end{split}
\end{equation*}
And hence, one has
\begin{equation}
\tilde{\xi}( p\text{t}_{1}+(1-p)\text{t}^{'}_{1})=p\tilde{\xi}(\text{t}_{1})+(1-p)\tilde{\xi}(\text{t}^{'}_{1}).
\end{equation}
Which can be easily generalized to (\ref{convlin}).\footnote{
It should be noted that, since it is unclear whether Lemma~\ref{Lemma} holds for strongly causal unquotiented OPTs, and because coarse-graining between events requires the $\curlyvee$ operation instead of summation, the conclusion for ontological models of such OPTs—based on the same reasoning—is that $\xi{\left( \, p\text{t}_{1}\curlyvee(1-p)\text{t}^{'}_{1} \,\, \big\vert \,\, \mathfrak{C}_{\mathcal{K}(\text{X})}(\text{C}_{\text{P}})\,\right)}=p\xi(\text{t}_{1}\vert \text{T})+(1-p)\xi(\text{t}^{'}_{1}\vert \text{T}^{'})$, where for noncontextual ontological models this reduces to $\xi_{nc}{\left( \, p\text{t}_{1}\curlyvee(1-p)\text{t}^{'}_{1}\right)}=p\xi_{nc}{(\text{t}_{1})}+(1-p)\xi_{nc}{(\text{t}^{'}_{1})}$.
}
Given that we have assumed the existence of the null transformation in quotiented OPTs, it follows from the preservation of coarse-graining that $\tilde{\xi}(\varepsilon_{\text{A} \rightarrow \text{B}})$, for every system A and B, is also a null transformation in any ontological model. Using this property of ontological models, along with the convex-linearity on transformations, it can be shown, by resorting to the same technique used in Ref.~\cite[App.~1]{hardy}, that $\tilde{\xi}$ is linear with respect to transformations, that is for every $r_{x}\in\mathbb{R}$, every system A and B $\in\textbf{Sys}(\Theta)$, and every $\text{t}_{x}\in\textbf{Transf}(\text{A}\rightarrow\text{B})$, such that $\sum_{x}r_{x}\text{t}_{x}\in\textbf{Transf}(\text{A}\rightarrow\text{B})$, one has
\begin{equation}
\tilde{\xi}{\left(\sum_{x}r_{x}\text{t}_{x}\right)}=\sum_{x} r_{x}\tilde{\xi}{\left(\text{t}_{x}\right)}.
\end{equation}

\section{Derivation of probability preservation}\label{app:prob-pres}

\begin{lemma}\label{lem:prob-pres}
    Let $f:[0,1]\to[0,1]$ be a multiplicative map from the real unit interval to itself:
    \begin{equation*}
        f{\left(pq\right)}=f{\left(p\right)}f{\left(q\right)},\quad \forall p,q\in[0,1];
    \end{equation*}
    then either $f$ is the zero map ($f=\mathtt{0}_{[0,1]}$)---i.e.~it maps everything to $0$---or it preserves the unit ($f{\left(1\right)}=1$). In the case where $f\neq\mathtt{0}_{[0,1]}$, suppose that $f$ is also additive:
    \begin{equation*}
    \begin{split}
    f{\left(p_1+p_2\right)}=f{\left(p_1\right)}+f{\left(p_2\right)},
    \\
    \forall p_1,p_2\in[0,1]:p_1+p_2\in[0,1]
    ;
    \end{split}
    \end{equation*}
    then $f$ is the identity map ($f=\mathtt{id}_{[0,1]}$)---i.e.~it maps everything to itself. In particular, any additive and multiplicative function on the real unit interval is either the zero map or the identity map.
\end{lemma}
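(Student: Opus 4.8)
The plan is to treat the multiplicative and the additive consequences separately, exploiting the order structure of $[0,1]$ throughout. First I would handle the purely multiplicative claim. Setting $p=q=1$ gives $f(1)=f(1)^2$, so $f(1)\in\{0,1\}$. If $f(1)=0$, then for any $p\in[0,1]$ we get $f(p)=f(p\cdot 1)=f(p)f(1)=0$, so $f=\mathtt{0}_{[0,1]}$. Hence the dichotomy: either $f$ is the zero map or $f(1)=1$. Note also that $f(0)=f(0)^2$ forces $f(0)\in\{0,1\}$, but $f(0)=1$ together with $f(p)=f(p\cdot 0)=f(p)\cdot 1$ would only say $f$ is constant $=1$ on... actually more carefully, $f(0)=f(0\cdot 0)=f(0)^2$, and if $f(0)=1$ then $f(0)=f(0\cdot p)=f(0)f(p)=f(p)$ for all $p$, so $f\equiv 1$, contradicting $f(1/2)=f(1/2)^2$ unless... wait $1=f(1/2)^2$ is consistent. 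I need additivity to rule this out, so I will simply defer: in the non-zero additive case I will show $f(0)=0$.

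Now assume $f\neq\mathtt{0}_{[0,1]}$, so $f(1)=1$, and that $f$ is additive. Step one: $f(0)=f(0+0)=2f(0)$, hence $f(0)=0$. Step two: additivity on $[0,1]$ gives, for any $p\in[0,1]$, $1=f(1)=f(p+(1-p))=f(p)+f(1-p)$, and since $f$ takes values in $[0,1]$ this shows $f(1-p)=1-f(p)$. Step three: $f$ is monotone nondecreasing. Indeed if $0\le p\le q\le 1$ then $q-p\in[0,1]$ and $f(q)=f(p)+f(q-p)\ge f(p)$ because $f(q-p)\ge 0$. Step four: pin down dyadic rationals. By additivity, $f(n p) = n f(p)$ whenever $p, 2p, \dots, np$ all lie in $[0,1]$; applying this with $p=1/2^k$ and using $f(1)=1$ gives $f(1/2^k)=1/2^k$, and then $f(m/2^k)=m/2^k$ for all $0\le m\le 2^k$. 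Step five: extend to all of $[0,1]$ by monotonicity and density of dyadic rationals. For arbitrary $x\in[0,1]$ and any dyadics $a\le x\le b$, monotonicity gives $a=f(a)\le f(x)\le f(b)=b$; letting $a\uparrow x$ and $b\downarrow x$ through dyadics forces $f(x)=x$. Hence $f=\mathtt{id}_{[0,1]}$. The final sentence of the lemma is then just the conjunction of these two cases.

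I do not expect a serious obstacle here; this is a classical Cauchy-type rigidity argument and every step is elementary. The one place that requires a little care is deriving monotonicity: it relies essentially on the codomain being $[0,1]$ (so that $f(q-p)\ge 0$), which is exactly what replaces the usual continuity or measurability hypothesis in Cauchy's equation and is what makes the conclusion hold without any regularity assumption. I would make sure to invoke that nonnegativity explicitly. A secondary minor point is to confirm that $np\in[0,1]$ in the relevant range before applying $f(np)=nf(p)$, which is automatic for the dyadic choices used. Everything else — the $f(1)^2=f(1)$ trick, $f(0)=2f(0)$, the density argument — is routine.
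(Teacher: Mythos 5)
Your proof is correct and follows essentially the same route as the paper: pin $f$ down on the dyadic rationals and then use the order structure of $[0,1]$ (nonnegativity of $f$) together with density of the dyadics to force $f=\mathrm{id}$. The only cosmetic differences are that you make monotonicity explicit and squeeze from both sides, whereas the paper argues one-sidedly by contradiction (reducing to the case $f(q)<q$ via $f(1-p)=1-f(p)$ and inserting a single dyadic between $f(q)$ and $q$), and your dyadic step uses additivity alone while the paper also invokes multiplicativity through $f\!\left(1/2^n\right)=\left[f\!\left(1/2\right)\right]^n$.
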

    \begin{proof}
    To begin with, note that any multiplicative function $f:[0,1]\to[0,1]$ which does not preserve the unit must map everything to $0$, since
    \begin{equation*}
        f{\left(1\right)}f{\left(1\right)}=f{\left(1^2\right)}=f{\left(1\right)}\neq 1
    \end{equation*}
    implies $f{\left(1\right)}=0$, which in turn implies that
    \begin{equation*}
        f{\left(p\right)}=f{\left(1\cdot p\right)}=f{\left(1\right)}f{\left(p\right)}=0
    \end{equation*}
    for all $p\in[0,1]$. Conversely, let us now assume that $f$ preserves the unit, i.e.~$f{\left(1\right)}=1$ holds, and that $f$ is also additive on $[0,1]$. Then one has
    \begin{equation*}
        1=f{\left(1\right)}=f{\left(\frac{1}{2}+\frac{1}{2}\right)}=f{\left(\frac{1}{2}\right)}+f{\left(\frac{1}{2}\right)}
        ,
    \end{equation*}
    implying $f{\left(\frac{1}{2}\right)}=\frac{1}{2}$. Accordingly, for every \emph{dyadic rational}
    \begin{equation*}
        \frac{m}{2^n},\quad m,n\in\mathbb{N}
    \end{equation*}
    in the unit interval, one has also:
    \begin{equation*}
        f{\left(\frac{m}{2^n}\right)}=mf{\left(\frac{1}{2^n}\right)}=m\left[f{\left(\frac{1}{2}\right)}\right]^{n}=\frac{m}{2^n}
        .
    \end{equation*}
    Now, for all $p\in[0,1]$ such that $f{\left(p\right)}\neq p$, since $f{\left(1\right)}=1$, either $f{\left(p\right)}<p$ or $f{\left(1-p\right)}<1-p$ must hold. Let then $q$ denote either $p$ or $1-p$, and assume that $f{\left(q\right)}<q$. On the one hand, since the dyadic rationals are dense in the unit interval, there exist $a,b\in\mathbb{N}$ such that
    \begin{equation}\label{eq:mid-dyadic}
        f{\left(q\right)} < \frac{a}{2^b} < q
        .
    \end{equation}
    On the other hand, one thus gets:
    \begin{equation}\label{eq:majorisation}
        f{\left(q\right)}=f{\left(\frac{a}{2^b}+\epsilon\right)}=f{\left(\frac{a}{2^b}\right)}+f{\left(\epsilon\right)}=\frac{a}{2^b}+f{\left(\epsilon\right)}
        ,
    \end{equation}
    where the definition
    \begin{equation*}
        \frac{a}{2^b}+\epsilon\coloneqq q
    \end{equation*}
    with $\epsilon\in(0,1)$ has been introduced, which is well-posed by construction in the light of relation~\eqref{eq:mid-dyadic}. However, identity~\eqref{eq:majorisation} in turn means that
    \begin{equation*}
        f{\left(q\right)}\geq\frac{a}{2^b}
        ,
    \end{equation*}
    contradicting relation~\eqref{eq:mid-dyadic}. Therefore, one can finally conclude that $f{\left(p\right)}=p$ for all $p\in[0,1]$.
\end{proof}
In general, for an OPT there are two possibilities: either the OPT is \emph{deterministic}---namely, $\textbf{Transf}{(\mathrm{I}\to\mathrm{I})}=\{0,1\}$---or the OPT admits of at least one probability different from $0$ and $1$. For the ontological model of any deterministic OPT, preservation of coarse-graining and of determinicity straightforwardly imply empirical adequacy (probability preservation). On the other hand, in a nondeterministic OPT, \emph{every} probability distribution can be approximated with arbitrary precision---see e.g.~Exercise 3.3 of Ref.~\cite[pp.~122-126-127]{d2017quantum}. Accordingly, if there exists one probability different from $0$ and $1$, \emph{closure under operational limits}~\cite{chiribella2016quantum,d2017quantum,Rolino_2025} guarantees not only that $\textbf{Transf}{(\mathrm{I}\to\mathrm{I})}=[0,1]$, but also that \emph{every arbitrary probability distribution} is a valid instrument of the OPT: thereby, in this case Lemma~\ref{lem:prob-pres} suffices to ensure empirical adequacy also for the ontological models of any (generally nondeterministic) OPT.

\end{document}